\documentclass[journal]{IEEEtran}
\IEEEoverridecommandlockouts

\usepackage{cite}
\usepackage{amsmath,amssymb,amsfonts}
\usepackage{algorithmic}
\usepackage{graphicx}
\usepackage{textcomp}
\usepackage{xcolor}

\usepackage{orcidlink}

\usepackage{glossaries}
\glsdisablehyper    

\usepackage{amsmath, amssymb, bm} 
\usepackage{amsthm}
\usepackage{dsfont}
\usepackage{dashbox}

\usepackage{tikz}
\usetikzlibrary{calc}
\usetikzlibrary{angles, quotes}
\usetikzlibrary{decorations.pathreplacing,decorations.markings,shapes.geometric}

\usepackage{hyperref}

\usepackage{siunitx}

\usepackage[normalem]{ulem}
\usepackage{cancel}

\usepackage{makecell}
\usepackage{tabularx}

\newcommand{\C}{\mathds{C}}
\newcommand{\R}{\mathds{R}}
\newcommand{\Z}{\mathds{Z}}
\newcommand{\E}{\mathds{E}}

\newcommand{\I}[1]{\boldsymbol{I}_{#1}}

\newcommand{\Likelihood}{\mathcal{L}}
\newcommand{\CN}{\mathcal{CN}}
\newcommand{\Compl}{\mathcal{O}}
\newcommand{\U}{\mathcal{U}}

\newcommand{\frob}[1]{\left\lVert #1 \right\rVert_{\mathrm{F}}}
\newcommand{\norm}[1]{\left\lVert #1 \right\rVert_{\mathrm{2}}}
\newcommand{\abs}[1]{\left\lvert #1 \right\rvert}

\DeclareMathOperator*{\argmax}{argmax}

\newcommand{\ExpData}[1]{\E_{\dataMat} \left\{ #1 \right\}}
\newcommand{\ExpChannel}[1]{\E_{\channelcoeff} \left\{ #1 \right\}}

\newcommand{\jc}{\mathrm{j}}    

\newcommand{\UEpos}{\boldsymbol{x}_{\mathrm{s}}}
\newcommand{\RXpos}[1]{\boldsymbol{x}_{#1}}
\newcommand{\SRXaperture}{\theta_{\mathrm{srx}}}
\newcommand{\SRXradius}{R_{\mathrm{srx}}}
\newcommand{\Sradius}{R_{\mathrm{scene}}}

\newcommand{\numRX}{N}      
\newcommand{\numF}{Q}       
\newcommand{\numP}{P}       
\newcommand{\numD}{D}       
\newcommand{\numPD}{L}      
\newcommand{\numCP}{L_{\mathrm{cp}}}      
\newcommand{\RXindex}{n}
\newcommand{\Findex}{q}     
\newcommand{\Pindex}{p}   
\newcommand{\Dindex}{d} 
\newcommand{\PDindex}{l} 
\newcommand{\BW}{B}         
\newcommand{\T}{T}          

\newcommand{\Fspacing}{\Delta_{\mathrm{f}}}       
\newcommand{\Tspacing}{\Delta_{\mathrm{t}}}       

\newcommand{\carrierF}{f_{\mathrm{s}}}              
\newcommand{\carrierWl}{\lambda_{\mathrm{s}}}       
\newcommand{\carrierWn}{k_{\mathrm{s}}}             

\newcommand{\constSet}{\mathcal{C}}                 
\newcommand{\constMap}{\nu}                         
\newcommand{\constSizeMax}{M_{\mathrm{max}}}        
\newcommand{\constSize}{M}                          
\newcommand{\constSizeqd}{M_{\Findex\Dindex}}       
\newcommand{\constSymbol}{s}                        
\newcommand{\constEnergy}{\mathcal{E}_{\constSize}} 
\newcommand{\constEnergyqd}{\mathcal{E}_{\constSizeqd}}     
\newcommand{\constVar}{\sigma_{\mathrm{s}}^2}       

\newcommand{\pilotMat}{\boldsymbol{S}_{\mathrm{P}}}                 
\newcommand{\dataMat}{\boldsymbol{S}_{\mathrm{D}}}                  
\newcommand{\symbolMat}{\boldsymbol{S}}                             
\newcommand{\dataMatEst}{\widehat{\boldsymbol{S}}_{\mathrm{D}}}     
\newcommand{\dataTensEst}{\widehat{\boldsymbol{\mathcal{S}}}_{\mathrm{D}}}
\newcommand{\pilotObs}{\boldsymbol{\mathcal{Y}}_{\mathrm{P}}}       
\newcommand{\dataObs}{\boldsymbol{\mathcal{Y}}_{\mathrm{D}}}        
\newcommand{\Obs}{\boldsymbol{\mathcal{Y}}}                         
\newcommand{\pilotEnergy}{\mathcal{P}_{\mathrm{P}}}                 
\newcommand{\pilotObsEq}{\boldsymbol{Y}_{\mathrm{P}}}               
\newcommand{\ObsEq}{\boldsymbol{Y}}                                 
\newcommand{\ObsEqDD}{\boldsymbol{Y}_{\mathrm{DD}}}                 

\newcommand{\channelMat}{\boldsymbol{H}}            
\newcommand{\channel}{\boldsymbol{\beta}}           
\newcommand{\channelcoeff}{\boldsymbol{\gamma}}     
\newcommand{\channelcoeffn}{\gamma_{\RXindex}}      
\newcommand{\stMat}{\boldsymbol{A}}                 
\newcommand{\channelVar}{\sigma_{\gamma}^{2}}      
\newcommand{\channelVarN}{\sigma_{\gamma}^{2\numRX}}    
\newcommand{\channelcoeffEstPilots}{\widehat{\boldsymbol{\gamma}}_{\mathrm{p}}}
\newcommand{\channelcoeffTest}{\widetilde{\boldsymbol{\gamma}}}
\newcommand{\channelConstruct}{\widehat{\boldsymbol{H}}_{\mathrm{p}}}

\newcommand{\channelPhase}{\phi}

\newcommand{\pilotAWGN}{\boldsymbol{\mathcal{N}}_{\mathrm{P}}}       
\newcommand{\dataAWGN}{\boldsymbol{\mathcal{N}}_{\mathrm{D}}}        
\newcommand{\noiseVar}{\sigma_{\mathrm{n}}^{2}}
\newcommand{\noiseVarK}{\sigma_{\mathrm{n}}^{2K}}
\newcommand{\snr}{\mathrm{SNR}}

\newcommand{\UEposTest}{\widetilde{\boldsymbol{x}}_{\mathrm{s}}}
\newcommand{\UEposEst}{\widehat{\boldsymbol{x}}_{\mathrm{s}}}
\newcommand{\AF}{\mathrm{AF}}
\newcommand{\rmse}{\mathrm{RMSE}}
\newcommand{\ser}{\mathrm{SER}}

\newcommand{\rangeRes}{R_{\mathrm{res}}}

\newcommand{\MMLo}{\mathrm{MML}}
\newcommand{\MMLa}{\mathrm{MML_{a}}}
\newcommand{\MMLfast}{\mathrm{MML_{fast}}}
\newcommand{\Pmethod}{\textsc{Pilot}}
\newcommand{\PDmethod}{\textsc{Genie}}

\newcommand{\DDcentrmethod}{\mathrm{DD_{centr}}}
\newcommand{\DDdistrmethod}{\mathrm{DD_{distr}}}

\newcommand{\HDDcentrmethod}{\mathrm{HDD_{centr}}}
\newcommand{\HDDdistrmethod}{\mathrm{HDD_{distr}}}
\newcommand{\SDDcentrmethod}{\mathrm{SDD_{centr}}}

\newcommand{\Ngrid}{N_{\mathrm{g}}}
\newcommand{\Nmc}{N_{\mathrm{mc}}}

\newcommand{\CorrP}{\Phi^{\mathrm{P}}}
\newcommand{\corrP}{\phi^{\mathrm{P}}}
\newcommand{\CorrDD}{\Phi^{\mathrm{DD}}}
\newcommand{\corrDD}{\phi^{\mathrm{DD}}}

\newacronym{isac}{ISAC}{Integrating Sensing And Communications}
\newacronym{ff}{FF}{Far-Field}
\newacronym{nf}{NF}{Near-Field}
\newacronym{5g-nr}{5G-NR}{Fifth Generation New Radio}
\newacronym{prs}{PRS}{Positioning Reference Signal}
\newacronym{6g}{6G}{Sixth Generation}
\newacronym{pcs}{PCS}{probabilistic constellation shaping}
\newacronym{aoa}{AoA}{Angle of Arrival}
\newacronym{ula}{ULA}{Uniform Linear Array}
\newacronym{cp}{CP}{cyclic prefix}

\newacronym{ofdm}{OFDM}{Orthogonal Frequency-Division Multiplexing}
\newacronym{qam}{QAM}{Quadrature Amplitude Modulation}
\newacronym{bpsk}{BPSK}{Binary Phase-Shift Keying}
\newacronym{psk}{PSK}{Phase-Shift Keying}
\newacronym{gfsk}{GFSK}{Gaussian Frequency Shift Keying}

\newacronym{ue}{UE}{User Equipment}
\newacronym{tx}{TX}{transmitter}
\newacronym{rx}{RX}{receiver}
\newacronym{srx}{SRX}{sensing receiver}
\newacronym{das}{DAS}{Distributed Antenna System}
\newacronym{cpu}{CPU}{Central Processing Unit}
\newacronym{uca}{UCA}{Uniform Circular Array}

\newacronym{los}{LoS}{Line of Sight}

\newacronym{awgn}{AWGN}{Additive White Gaussian Noise}
\newacronym{snr}{SNR}{Signal-to-Noise Ratio}

\newacronym{mml}{MML}{Marginal Maximum Likelihood}
\newacronym{iml}{IML}{Integrated Maximum Likelihood}
\newacronym{ml}{ML}{Maximum Likelihood}
\newacronym{jml}{JML}{Joint Maximum Likelihood}
\newacronym{np}{NP}{Nuisance Parameter}
\newacronym[longplural=Parameters of Interest]{poi}{PoI}{Parameter of Interest}
\newacronym{iid}{i.i.d.}{independent and identically distributed}
\newacronym{pdf}{p.d.f.}{probability distribution function}
\newacronym{dd}{DD}{Decision-Directed}
\newacronym{sdd}{SDD}{Soft-Decision-Directed}
\newacronym{hdd}{HDD}{Hard-Decision-Directed}
\newacronym{zf}{ZF}{Zero Forcing}
\newacronym{lmmse}{LMMSE}{Linear Minimum Mean Square Error}
\newacronym{rmse}{RMSE}{Root Mean Square Error}
\newacronym{mc}{MC}{Monte Carlo}
\newacronym{ser}{SER}{Symbol Error Rate}
\newacronym{mae}{MAE}{Mean Absolute Error}
\newacronym{af}{AF}{Ambiguity Function}
\newacronym{elmmse}{ELMMSE}{Ergodic Linear Minimum Mean Square Error}
\newacronym{nda}{NDA}{Non-Data-Aided}
\newacronym{da}{DA}{Data-Aided}
\newacronym{zzb}{ZZB}{Ziv-Zakai Bound}
\definecolor{myBlack}{HTML}{565656}
\definecolor{myGreen}{HTML}{5bb900}
\definecolor{myGreen2}{HTML}{005e00}
\definecolor{myBlue}{HTML}{005ae9}
\definecolor{myBlue2}{HTML}{009be8}
\definecolor{myRed}{HTML}{d7001d}
\definecolor{myOrange}{HTML}{ffa409}
\definecolor{myPurple}{HTML}{6000e2}
\definecolor{myPink}{HTML}{F03080}

\definecolor{myGreenArrow1}{HTML}{70b500}
\definecolor{myGreenArrow2}{HTML}{0d8800}

\definecolor{Pcolor}{HTML}{828282}
\definecolor{PDcolor}{HTML}{2d2d2d}
\definecolor{MMLacolor}{HTML}{005acb}
\definecolor{MMLfastcolor}{HTML}{0086ff}
\definecolor{DDcentrcolor}{HTML}{ffa409}
\definecolor{DDdistrcolor}{HTML}{d7001d}
\definecolor{SDDcentrcolor}{HTML}{d454ff}
\definecolor{SDDdistrcolor}{HTML}{6000e2}

\theoremstyle{plain}
\newtheorem{assumption}{Assumption}

\newtheorem{proposition}{Proposition}[section]

\newtheorem{remark}{Remark}[section]

\def\BibTeX{{\rm B\kern-.05em{\sc i\kern-.025em b}\kern-.08em
    T\kern-.1667em\lower.7ex\hbox{E}\kern-.125emX}}

\begin{document}

\title{Localization in OFDM Passive Distributed Antenna Systems \\ with Pilots and Unknown Data Payloads: \\ A Marginal Maximum Likelihood Approach
\thanks{Mathieu Reniers is a Research Fellow of the Fonds de la Recherche
Scientifique - FNRS.
}
}

\author{
    Mathieu Reniers\orcidlink{0009-0007-3811-6112},
    Martin Willame\orcidlink{0000-0002-7107-6198}, 
    Jérôme Louveaux\orcidlink{0000-0003-2557-1857}, 
    Luc Vandendorpe\orcidlink{0000-0003-4958-8848}, \\
    ICTEAM, UCLouvain - Louvain-La-Neuve, Belgium.
    \footnotesize{Emails: \{firstname.lastname\}@uclouvain.be}
}

\bstctlcite{MyBSTcontrol}

\maketitle

\begin{abstract}
    Integrated Sensing and Communications (ISAC) is emerging as a key paradigm for future Sixth-Generation (6G) networks, with communication-centric designs favored for their compatibility with existing standards.
Communication signals contain both known deterministic pilot symbols and unknown random data payloads. 
Most localization approaches rely solely on pilots, discarding the position information contained in the data symbols, which constitute the majority of each transmitted frame.
Alternatively, Decision-Directed (DD) approaches exploit data decisions, thereby inherently limiting positioning performance to that of the communication system.
In this paper, we derive a Marginal Maximum Likelihood (MML) estimator that jointly leverages pilot and data payloads without requiring data decoding, enabling operation with high-order constellations and under challenging noise conditions.
We consider an opportunistic scenario in which an Orthogonal Frequency-Division Multiplexing (OFDM) signal transmitted by a User Equipment (UE) is captured by a distributed receiver array. 
Through numerical simulations, we demonstrate that the proposed method achieves superior localization performance compared to existing approaches and consistently converges to the genie bound (where data symbols are assumed perfectly known) at a lower Signal-to-Noise Ratio (SNR) than any DD method.
Furthermore, the proposed method remains robust to constellation size, unlike DD approaches, whose performance degrades with increasing modulation order.
Finally, we provide a computational complexity analysis of the proposed method and the considered baselines, highlighting the impact of system parameters on their respective computational costs.
\end{abstract}

\begin{IEEEkeywords}
Integrated Sensing and Communication, Non-Data-Aided, Marginal Maximum Likelihood, Data Payloads, Nuisance Parameter, OFDM Passive Radar, Localization
\end{IEEEkeywords}

\section{Introduction}

\glsunset{isac}
\IEEEPARstart{I}NTEGRATED Sensing and Communications (ISAC) is likely to become a cornerstone of future \gls{6g} networks and next-generation Wi-Fi standards, enabling mobile systems to sense their environment \cite{meng_cooperative_2024,zheng_radar_2019,mazahir_survey_2021}. 
Integrating communication and sensing into a single platform allows hardware sharing, improving spectrum efficiency \cite{liu_joint_2020} and reducing device cost, size, and power consumption, while potentially enhancing the performance of both services \cite{feng_joint_2020,zhang_overview_2021}.
The \textit{communication-centric} approach—where communication remains the primary function and sensing or radar capabilities are performed by reusing the communication waveform—has emerged as the most promising, owing to its compatibility with existing wireless standards \cite{lu_sensing_2025}.
In this paradigm, the transmitted communication signal comprises pilot symbols, which are \textbf{deterministic} and \textbf{known} to both \gls{tx} and \gls{rx}, alongside data symbols, which carry information and are \textbf{random} and \textbf{unknown} at the \gls{rx}.

\subsection{Related Works}

\subsubsection{Pilot-Only Localization}
The traditional approach for \gls{isac} radar sensing relies solely on pilot or reference signals dedicated to this purpose \cite{wei_5g_2023,golzadeh_joint_2024,wei_multiple_2024}. 
Wei et al.\ \cite{wei_5g_2023} investigate radar sensing performance using \gls{5g-nr} \gls{prs}.
This work was extended to irregular \gls{prs} resource patterns for high-mobility scenarios in \cite{golzadeh_joint_2024} and to collaborative multi-reference-signal schemes in \cite{wei_multiple_2024}.
However, these approaches completely overlook the localization information contained in the data payloads, which constitute the majority of each transmitted frame (typically $\SI{75}{\percent}$--$\SI{97}{\percent}$ in \gls{5g-nr} \cite{3gpp_5g_2025}).

\subsubsection{Characterizing Sensing under Random Data Signals}
To address this limitation, several studies instead characterize the sensing performance under random data signals rather than deterministic pilots \cite{liu_cp-ofdm_2025,keskin_fundamental_2025,du_reshaping_2024}, typically assuming a monostatic configuration where the \gls{srx} has perfect knowledge of the transmitted data symbols. 
Liu et al.\ \cite{liu_cp-ofdm_2025} show that, among all communication waveforms with \glsdesc{cp}, \gls{ofdm} is optimal for minimizing ranging sidelobes with \gls{qam}, strengthening its candidacy for \gls{6g}—and motivating its adoption in this work.
Keskin et al.\ \cite{keskin_fundamental_2025} reveal a fundamental \textit{time-frequency trade-off} in monostatic \gls{ofdm} \gls{isac}: high-order \gls{qam} improves communication rates but degrades sensing due to elevated sidelobe levels from the non-constant modulus data \cite{du_reshaping_2024}, whereas low-order \gls{psk} minimizes sidelobes at the cost of reduced spectral efficiency. 
This trade-off has motivated \glsdesc{pcs} approaches aimed at jointly optimizing both functions \cite{du_reshaping_2024}.
While these studies provide valuable insights into the impact of data randomness on sensing, perfect knowledge of the transmitted data at the \gls{srx} is not attainable in multistatic nor passive sensing configurations.

\subsubsection{Decision-Directed Philosophy}
A natural way to exploit the full transmitted frame is to demodulate the data payloads and treat the decisions as additional pilots, following a \gls{dd} philosophy. 
Wypich and Zielinski \cite{wypich_ofdm-based_2025} demonstrate that in \gls{ofdm}-based passive radar, reconstructed data effectively supplements pilots under low \gls{ser} conditions, while performance reduces to the pilot-only case under challenging channel conditions or high-order modulations—a finding validated experimentally in \cite{wypich_experimental_2026}. 
Brunner et al. \cite{brunner_bistatic_2025} propose a complete bistatic \gls{isac} framework accounting for synchronization and characterize sensing performance over the full frame, including the impact of decoding failures and residual synchronization errors.
To enhance sensing performance at a modest spectral efficiency cost, Henninger et al. \cite{henninger_hybrid_2026} propose a resource allocation scheme placing lower-order constellation symbols as pseudo-pilots.
The \gls{dd} methodology has also been integrated into iterative frameworks exploiting the mutual benefits of \textit{communication-aided sensing} and \textit{sensing-aided communication} \cite{zhao_joint_2024,keskin_bridging_2025-1}. 
This approach was first developed for single-carrier bistatic systems by Zhao et al. in \cite{zhao_joint_2024} and subsequently extended to \gls{ofdm} by Keskin et al. in \cite{keskin_bridging_2025-1}.
While these methods achieve low localization errors under favorable \gls{snr} conditions, they incur high computational complexity due to forward error correction decoding or iterative processing, and their localization performance is inherently limited by that of the communication system.

\subsubsection{Maximum Likelihood Philosophy}
An alternative to \gls{dd} is to treat the unknown data symbols as \textbf{\glspl{np}}, i.e., parameters that affect the distribution of the observations used for estimation but are not of direct interest, as opposed to the \textbf{\glspl{poi}}.
A natural way to eliminate the dependence on the \glspl{np} is to marginalize the likelihood over their prior distribution, i.e., to integrate the conditional likelihood of the observations with respect to the \gls{np} prior.
This approach is referred to as \gls{mml}, or sometimes as \gls{iml} in the literature \cite{berger_integrated_1999}.
In the case of data symbols, this amounts to evaluating the conditional likelihood over all constellation points, rather than making hard symbol decisions as in \gls{dd}.
This formulation falls within the class of \gls{nda} estimation methods, where unknown transmitted symbols are treated as random variables and handled statistically rather than pre-estimated\footnotemark.
\footnotetext{
There is a subtle terminology shift in the recent literature.
Originally, \gls{nda} refers to estimation using unknown symbols (i.e., data), while \gls{da} refers to the use of known symbols (i.e., pilots).
However, some recent works use \gls{da} terminology \cite{mensing_data-aided_2009,gupta_data-aided_2025,keskin_bridging_2025-1} to describe methods following a \gls{dd} philosophy.
}
While \gls{nda} estimators have been widely studied for communication purposes, their application to positioning remains sparse \cite{monfared_iterative_2020,graff_ofdm-based_2026}.
Monfared et al. \cite{monfared_iterative_2020} propose an iterative \gls{mml}-based \gls{nda} algorithm using \glsdesc{aoa} measurements for sensor localization with a \glsdesc{gfsk} waveform.
Graff and Humphreys \cite{graff_ofdm-based_2026} derive the \gls{mml} estimator for \gls{ofdm}-based positioning, compare pilot-only, \gls{dd}, and \gls{mml} range estimation performance against a derived \glsdesc{zzb}, and apply the framework to simulated low-earth orbit satellite channels.
However, their work is limited to a single-antenna receiver where only the data symbols are treated as \glspl{np}, while the timing offset, phase offset, and channel gain are either explicitly estimated or assumed known.
In contrast, the present work extends this framework to a multi-antenna distributed array operating without phase synchronization, treating the channel coefficient at each antenna as an additional \gls{np}, thereby broadening the scope of the \gls{mml} approach.
Furthermore, while their work reports a computational complexity of $\Compl(\#\constSet)$, where $\#\constSet$ denotes the constellation size, we introduce an analytical acceleration that reduces it to $\Compl(\sqrt{\#\constSet})$.

Another approach to handle \glspl{np} is \gls{jml} estimation, where both the \glspl{poi} and \glspl{np} are \textit{jointly} estimated.
In our previous work \cite{reniers_joint_2026}, we applied this method to a source localization scenario restricted to a co-located \gls{ula} operating under \gls{ff} conditions, limitations that the present work overcomes.

\subsection{Contributions}
To address the aforementioned limitations, we develop an \gls{mml} framework for an ``uplink''\footnotemark scenario where the transmitted communication signal from a \gls{ue} is captured by an \textbf{opportunistic \gls{srx}} consisting of \textbf{multiple distributed nodes} whose sole objective is to localize the source.
Since the nodes are \textbf{not phase-synchronized}, the random unknown channel coefficients are treated as additional \glspl{np}, alongside the data symbols.
Given the distributed nature of the \gls{srx}, the spherical nature of the wavefront can no longer be ignored.
Our main contributions are summarized as follows:
\begin{itemize}
    \item We develop the \gls{mml} framework for uplink localization with a distributed, non-phase-synchronized \gls{srx}, treating both data symbols and channel coefficients at all nodes as \glspl{np}.
    The optimal solution is derived and shown to be computationally intractable, motivating the development of practical approximations.
    \item We propose a computationally feasible approximation of the \gls{mml} solution.
    Through \gls{mc} simulations, we demonstrate superior localization performance over existing methods, with two key findings: the \textbf{proposed method is robust to constellation size}—unlike \gls{dd} methods, whose performance degrades with increasing modulation order—and \textbf{converges to the $\PDmethod$ bound} (i.e., data assumed perfectly known at the \gls{srx}) \textbf{at lower \gls{snr} than competing \gls{dd} approaches}.
    \item Additionally, we derive a closed-form computational acceleration of the proposed method for $\constSize$-\gls{qam} constellations, reducing the complexity from $\Compl(\constSize)$ to $\Compl(\sqrt{\constSize})$ while yielding the exact same solution.
    This acceleration is applicable to other likelihood-based problems and, to the best of the authors' knowledge, has not been previously reported in the literature.
    \item We provide a comprehensive analysis of the proposed method against the considered baselines, examining the impact of system parameters on both localization performance and computational requirements, supported by theoretical insights and simulation results.
\end{itemize}

\footnotetext{This terminology is used by analogy with the conventional communication uplink, and is used here interchangeably with source localization.}
\subsection{Structure of the Paper}
The remainder of the paper is organized as follows.
Section \ref{sec:SystemModel} describes the system model, including the scenario, channel model, and signal definitions.
In Section \ref{sec:MMLEstimator}, we derive the optimal \gls{mml} estimator, establish its computational intractability, introduce a feasible approximation, and present the analytical acceleration for \gls{qam} constellations.
Section \ref{sec:NumericalResults} presents the \gls{mc} simulation results and investigates the impact of system parameters on localization performance, as well as the impact of data demodulation errors on \gls{dd}-based localization.
Section \ref{sec:Complexity} provides a detailed complexity analysis.
Finally, Section \ref{sec:Conclusion} concludes the paper and outlines directions for future work.\\
Throughout this paper, some key observations and discussions are presented in \textbf{Remark} environments for ease of reference.

\subsection{Notations}
Scalars, vectors, matrices and tensors are respectively denoted by $a$, $\boldsymbol{a}$, $\boldsymbol{A}$ and $\boldsymbol{\mathcal{A}}$, and when expressed as function of parameters, written as $a(\cdot)$, $\boldsymbol{a}(\cdot)$, $\boldsymbol{A}(\cdot)$ and  $\boldsymbol{\mathcal{A}}(\cdot)$.
The $i$-th element of $\boldsymbol{a}$, the $(i,j)$-th element of $\boldsymbol{A}$ and the $(i,j,k)$-th element of $\boldsymbol{\mathcal{A}}$ are expressed as $\boldsymbol{a}[i]$, $\boldsymbol{A}[i,j]$ and $\boldsymbol{\mathcal{A}}[i,j,k]$. 
The absolute value, the vector $\ell_2$ norm and the Frobenius norm are denoted by $\abs{a}$, $\norm{\boldsymbol{a}}$ and $\frob{\boldsymbol{A}}$, respectively.
The $K \times K$ identity matrix is written as $\I{K}$.
The real and complex sets are denoted by $\R$ and $\C$, the complex conjugate is defined by $a^*$, and $\jc \triangleq \sqrt{-1}$. 
The cardinality of a finite set $\mathcal{S}$ is given by $\#\mathcal{S}$.
The likelihood function is represented by~$\Likelihood(\cdot)$.
Given a true quantity $\theta$, candidate and final estimates are indicated by $\widetilde{\theta}$ and $\widehat{\theta}$, respectively.
Finally, the expectation over $\theta$ is notated $\E_{\theta} \{ \cdot \}$.

\section{System Model}\label{sec:SystemModel}
\subsection{Scenario}
This paper investigates the problem of localizing a single-antenna \gls{ue} at position $\UEpos \in \R^{2}$, transmitting an \gls{ofdm} uplink signal received by an \gls{srx} consisting of $\numRX$ single-antenna \gls{rx} nodes at positions $\{\RXpos{\RXindex}\}_{\RXindex=0}^{\numRX-1} \in \R^{\numRX \times 2}$ and forming a \gls{das}.
Note that the proposed framework can be straightforwardly extended to $\R^{3}$ without loss of generality.
An illustration of the scenario is presented in \autoref{fig:scenario}.
Note that the node positions $\{\RXpos{\RXindex}\}_{\RXindex=0}^{\numRX-1}$ are known to the \gls{cpu}, which estimates $\UEpos$.

\begin{figure}[t]
    \centering
        \resizebox{1.0\linewidth}{!}{%
        \def\SRXCOLOR{black}
\def\TXCOLOR{black}

\newcommand{\antenna}[2][1]{
    \draw[thick, \SRXCOLOR] #2 -- ++(0,0.5*#1);
    \draw[thick, \SRXCOLOR] #2 ++(0,0.5*#1) -- ++(-0.25*#1,0.25*#1);
    \draw[thick, \SRXCOLOR] #2 ++(0,0.5*#1) -- ++(0.25*#1,0.25*#1);
}
\def\ueX{-0.5}  
\def\ueY{0} 
\def\ueradius{0.08} 
\def\antennaheight{0.75}    
\def\antennapositions{(-3,-2), (-1,-3), (1,-2.5), (3,-1), (2,0.5)}
\def\cpuY{-4.5}  
\def\cpuX{0.5}    

\begin{tikzpicture}
    \coordinate (ue) at (\ueX,\ueY);
    \fill[\TXCOLOR] (ue) circle (\ueradius);
    \node[above, \TXCOLOR] at ($(ue)+(0,0.1)$) {TX};
    \node[below, \TXCOLOR] at ($(ue)+(0,-0.15)$) {$\UEpos$};

    \draw[\TXCOLOR!50!white] (ue) circle (0.7);
    \draw[\TXCOLOR!35!white] (ue) circle (0.9);
    \draw[\TXCOLOR!20!white] (ue) circle (1.1);
    \draw[\TXCOLOR!5!white] (ue) circle (1.3);
    \foreach [count=\i from 0] \pos in \antennapositions {
        \antenna[1.0]{\pos}
        \path \pos coordinate (ant\i);  
    }
    \foreach [count=\i from 0] \pos in \antennapositions {
            \ifnum\i<2
        \node[below, \SRXCOLOR] at \pos {\small RX$_{\i}$};
        \node[below, \SRXCOLOR] at \pos (rx\i) {\small RX$_{\i}$};
    \fi
    }
    \path let \p1=(ant1), \p2=(ant2), \n1={atan2(\y2-\y1,\x2-\x1)} in
    node[rotate=\n1] at ($(ant1)!0.5!(ant2)$) {$\cdots$};
    \path let \p1=(ant2), \p2=(ant3), \n1={atan2(\y2-\y1,\x2-\x1)} in
    node[rotate=\n1] at ($(ant2)!0.5!(ant3)$) {$\cdots$};
    \node[below, \SRXCOLOR] at (ant2) (rx2) {\small RX$_{n}$};
    \node[right, \SRXCOLOR] at ($(ant2)+(0,+0.25)$) {$\RXpos{\RXindex}$};
    \node[below, \SRXCOLOR] at (ant3) (rx3) {\small RX$_{\numRX - 2}$};
    \node[below, \SRXCOLOR] at (ant4) (rx4) {\small RX$_{\numRX - 1}$};
    
    \node[draw=\SRXCOLOR, thick, rectangle, rounded corners, fill=\SRXCOLOR!10, text=\SRXCOLOR, minimum width=2cm, minimum height=0.5cm] at (\cpuX,\cpuY) (cpu) {CPU};
    \draw[\SRXCOLOR,thin] ($(ant0)+(0,-0.5)$) -- ++(0,-1.3) -| (cpu.north);
    \draw[\SRXCOLOR,thin]  ($(ant1)+(0,-0.5)$) -- ++(0,-0.3) -| (cpu.north);
    \draw[\SRXCOLOR,thin]  ($(ant2)+(0,-0.5)$) -- ++(0,-0.8) -| (cpu.north);
    \draw[\SRXCOLOR,thin]  ($(ant3)+(0,-0.5)$) -- ++(0,-2.3) -| (cpu.north);
    \draw[\SRXCOLOR,thin]  ($(ant4)+(0,-0.5)$) -- ++(0.0,-0.1) -- ++(2.0,0.0) -- ++(0,-3.7) -| (cpu.north);

    \draw[<->,sloped] ($(ant2)+(0,\antennaheight)$) -- ($(ue)+(\ueradius,-\ueradius)$) 
            node[pos=0.5, align=center,above=0.02cm, fill=white, fill opacity=0.5, text opacity=1] {$\norm{\UEpos - \RXpos{n}}$};
    
    \draw[thick, ->, gray] (-3.1,0) -- ++(0.5,0) node[right]{$x$};
    \draw[thick, ->, gray] (-3,-0.1) -- ++(0,0.5) node[above]{$y$};

    \node[\SRXCOLOR] at (-1.8,-1.5) {SRX};

\end{tikzpicture}
    }
    \caption{Illustration of the considered scenario. 
    The \gls{ue} (i.e., the \gls{tx}) transmits an uplink signal to a \gls{das} (i.e., the \gls{srx}) consisting of $\numRX$ single-antenna \gls{rx} nodes. 
    A \gls{cpu} collects the received signals and performs localization.
    }
    \label{fig:scenario}
\end{figure}

The \gls{ue} transmits $\numP$ pilot and $\numD$ data \gls{ofdm} symbols across $\numF$ subcarriers, with a subcarrier spacing of $\Fspacing$, yielding a total bandwidth of $\BW \triangleq \numF \Fspacing$.
The frequency corresponding to the first subcarrier is denoted by $\carrierF$, with associated wavelength $\carrierWl$ and wavenumber $\carrierWn \triangleq 2\pi \carrierF / c = 2\pi / \carrierWl$, where $c$ represents the speed of light.
Pilot symbols are given by $\pilotMat \in \C^{\numF \times \numP}$ and may consist of arbitrary complex sequences.
Data symbols are given by $\dataMat \in \constSet^{\numF \times \numD} \subset \C^{\numF \times \numD}$, where each element $\dataMat[\Findex,\Dindex]$ belongs to a given constellation $\constSet_{\constMap(\Findex,\Dindex)} \in \constSet$, selected from the set $\constSet$.
The constellation assignment across frquency and time dimensions is denoted $\constMap(\Findex,\Dindex)$ and is determined by the \gls{tx}.
It is typically defined in the preamble containing transmission metadata.
To simplify notation, we adopt the following shorthand: \begingroup \small$\dataMat \in \constSet_{\constMap}^{\numF \times \numD} \triangleq \{\dataMat[\Findex,\Dindex] \in \constSet_{\constMap(\Findex,\Dindex)}\}_{\Findex,\Dindex=0,0}^{\numF-1,\numD-1}$\endgroup.
The symbol transmission follows the resource grid illustrated in \autoref{fig:resource_grid}.
Note that other resource allocations can be accommodated; specifically, the time-domain distribution of pilot and data symbols has no impact, as the channel is assumed constant across the time dimension (see Section~\ref{sec:SystemModelChannel}), whereas a uniform allocation across the frequency axis is required.

\begin{figure}[t]
    \centering
    \resizebox{1.0\linewidth}{!}{%
        \def\numPilots{4}
\def\numData{14}
\pgfmathsetmacro{\numTimeSlots}{\numPilots+\numData}   
\def\numSubcarriers{9}   
\def\cellSize{0.25}       

\begin{tikzpicture}

    \pgfmathsetmacro{\numPilotsMinusOne}{\numPilots-1}
    \pgfmathsetmacro{\numTimeSlotsMinusOne}{\numTimeSlots-1}
    \pgfmathsetmacro{\numSubcarriersMinusOne}{\numSubcarriers-1}
    
    \foreach \x in {0,...,\numPilotsMinusOne} {
        \foreach \y in {0,...,\numSubcarriersMinusOne} {
            \fill[black!10] (\x*\cellSize, \y*\cellSize) rectangle ({(\x+1)*\cellSize}, {(\y+1)*\cellSize});
        }
    }
    
    \foreach \x in {\numPilots,...,\numTimeSlotsMinusOne} {
        \foreach \y in {0,...,\numSubcarriersMinusOne} {
            \fill[black!20] (\x*\cellSize, \y*\cellSize) rectangle ({(\x+1)*\cellSize}, {(\y+1)*\cellSize});
        }
    }

    \foreach \x in {0,...,\numTimeSlots} {
        \draw[black!30] (\x*\cellSize, 0) -- (\x*\cellSize, \numSubcarriers*\cellSize);
    }
    \foreach \y in {0,...,\numSubcarriers} {
        \draw[black!30] (0, \y*\cellSize) -- (\numTimeSlots*\cellSize, \y*\cellSize);
    }

    \draw[->, thick] (-0.75*\cellSize, 0) -- (\numTimeSlots*\cellSize + \cellSize, 0) node[pos=0.4, below=0.5cm] {Time};
    \draw[->, thick] (0, -0.75*\cellSize) -- (0, \numSubcarriers*\cellSize + \cellSize) node[pos=0.5, left=1.5cm, rotate=90, anchor=center] {Frequency};

    \draw[<->, thick] (0, -\cellSize) -- (\numPilots*\cellSize, -\cellSize) node[midway, below] {$\numP\Tspacing$};
    \draw[<->, thick] (\numPilots*\cellSize, -\cellSize) -- (\numTimeSlots*\cellSize, -\cellSize) node[midway, below] {$\numD\Tspacing$};
    \draw[<->, thick] (-\cellSize,0) -- (-\cellSize, \numSubcarriers*\cellSize) node[midway, left] {$\numF\Fspacing$}
    node[pos=0, left=0.1cm] {$\carrierF$};

    \pgfmathsetmacro{\pilotCenter}{\numPilots*\cellSize/2}
    \pgfmathsetmacro{\dataCenter}{(\numPilots + \numTimeSlots)*\cellSize/2}
    \pgfmathsetmacro{\verticalCenter}{\numSubcarriers*\cellSize/2}
    
    \node[] at (\pilotCenter, \verticalCenter) {$\pilotMat$};
    \node[] at (\dataCenter, \verticalCenter) {$\dataMat$};

    \draw[<->, thick]
    (\numTimeSlots*\cellSize - 2*\cellSize, \numSubcarriers*\cellSize + 0.75*\cellSize)
    --
    (\numTimeSlots*\cellSize - 1*\cellSize, \numSubcarriers*\cellSize + 0.75*\cellSize)
    node[midway, above] {$\Tspacing$};

    \draw[<->, thick]
    (\numTimeSlots*\cellSize + 0.75*\cellSize, \numSubcarriers*\cellSize - 2*\cellSize)
    --
    (\numTimeSlots*\cellSize + 0.75*\cellSize, \numSubcarriers*\cellSize - 1*\cellSize)
    node[midway, right] {$\Fspacing$};


    \def\DataIndex{1}
    \def\SubcarrierIndex{3}
    \pgfmathsetmacro{\dotX}{(\numPilots + \DataIndex + 0.5)*\cellSize}
    \pgfmathsetmacro{\dotY}{(\SubcarrierIndex + 0.5)*\cellSize}

    \fill[myBlue] (\dotX, \dotY) circle (1.5pt);

    \pgfmathsetmacro{\gridTop}{\numSubcarriers*\cellSize}
    \pgfmathsetmacro{\constellationBottom}{\gridTop + 0.3}
    \pgfmathsetmacro{\constellationCenterX}{\dotX - 2*\cellSize}
    \pgfmathsetmacro{\constellationCY}{\constellationBottom + 0.45}

    \draw[myBlue, thin, dashed] (\dotX, \dotY) node[below, font=\tiny] {$(\Findex,\Dindex)$} -- (\constellationCenterX, \constellationBottom) ;

    \def\qamScale{0.1}
    \def\qamDotSize{1.2pt}
    \pgfmathsetmacro{\qamHalf}{3*\qamScale}

    \node[above, font=\footnotesize] at (\constellationCenterX, \constellationBottom + 0.93)
        {$\constSet_{\constMap(\Findex,\Dindex)}$};

    \draw[->, black!50, thin]
        ({\constellationCenterX - \qamHalf - 0.2}, {\constellationCY})
        --
        ({\constellationCenterX + \qamHalf + 0.2}, {\constellationCY});
    \draw[->, black!50, thin]
        ({\constellationCenterX}, {\constellationBottom + 0.02})
        --
        ({\constellationCenterX}, {\constellationBottom + 0.93});

    \pgfmathsetmacro{\selectedQI}{3}    
    \pgfmathsetmacro{\selectedQQ}{3}  

    \pgfmathsetmacro{\selectedI}{2*\selectedQI - 3}
    \pgfmathsetmacro{\selectedQ}{2*\selectedQQ - 3}

    \foreach \xi in {-3,-1,1,3} {
        \foreach \yi in {-3,-1,1,3} {
            \pgfmathsetmacro{\px}{\constellationCenterX + \xi*\qamScale}
            \pgfmathsetmacro{\py}{\constellationCY + \yi*\qamScale}
            \pgfmathsetmacro{\isSelected}{(\xi == \selectedI) && (\yi == \selectedQ) ? 1 : 0}
            \ifnum\isSelected=1
                \fill[myBlue] (\px, \py) circle (2pt);
                \draw[myBlue, thin] (\px, \py) circle (3pt) node[above right=1pt, font=\footnotesize] {$\dataMat[\Findex,\Dindex]$};
            \else
                \fill[black] (\px, \py) circle (\qamDotSize);
            \fi
        }
    }


    \def\DataIndex{9}
    \def\SubcarrierIndex{6}
    \pgfmathsetmacro{\dotX}{(\numPilots + \DataIndex + 0.5)*\cellSize}
    \pgfmathsetmacro{\dotY}{(\SubcarrierIndex + 0.5)*\cellSize}

    \fill[myBlue2] (\dotX, \dotY) circle (1.5pt);

    \pgfmathsetmacro{\gridTop}{\numSubcarriers*\cellSize}
    \pgfmathsetmacro{\constellationBottom}{\gridTop + 0.5}
    \pgfmathsetmacro{\constellationCenterX}{\dotX + 1*\cellSize}
    \pgfmathsetmacro{\constellationCY}{\constellationBottom + 0.45}

    \draw[myBlue2, thin, dashed] (\dotX, \dotY) node[below, font=\tiny] {$(\Findex',\Dindex')$} -- (\constellationCenterX, \constellationBottom);

    \def\qamScale{0.2}   
    \def\qamDotSize{1.2pt}
    \pgfmathsetmacro{\qamHalf}{1*\qamScale}

    \node[above, font=\footnotesize] at (\constellationCenterX, \constellationBottom + 0.93)
        {$\constSet_{\constMap(\Findex',\Dindex')}$};

    \draw[->, black!50, thin]
        ({\constellationCenterX - \qamHalf - 0.15}, {\constellationCY})
        --
        ({\constellationCenterX + \qamHalf + 0.15}, {\constellationCY});
    \draw[->, black!50, thin]
        ({\constellationCenterX}, {\constellationBottom + 0.02})
        --
        ({\constellationCenterX}, {\constellationBottom + 0.93});

    \pgfmathsetmacro{\selectedQI}{0}   
    \pgfmathsetmacro{\selectedQQ}{0}   

    \pgfmathsetmacro{\selectedI}{2*\selectedQI - 1}
    \pgfmathsetmacro{\selectedQ}{2*\selectedQQ - 1}

    \foreach \xi in {-1,1} {
        \foreach \yi in {-1,1} {
            \pgfmathsetmacro{\px}{\constellationCenterX + \xi*\qamScale}
            \pgfmathsetmacro{\py}{\constellationCY + \yi*\qamScale}
            \pgfmathsetmacro{\isSelected}{(\xi == \selectedI) && (\yi == \selectedQ) ? 1 : 0}
            \ifnum\isSelected=1
                \fill[myBlue2] (\px, \py) circle (2pt);
                \draw[myBlue2, thin] (\px, \py) circle (3pt) node[below left=1pt, font=\footnotesize] {$\dataMat[\Findex',\Dindex']$};
            \else
                \fill[black] (\px, \py) circle (\qamDotSize);
            \fi
        }
    }

\end{tikzpicture}
    }
    \caption{Illustration of the considered resource grid. 
    }
    \label{fig:resource_grid}
\end{figure}

The following assumption is adopted for the considered scenario:
\begin{assumption}\label{ass:stillness}
    The \gls{ue} and the background are assumed to be stationary during the frame duration $\T \triangleq (\numP+\numD)\Tspacing$, with $\Tspacing$ being the \gls{ofdm} symbol duration. Hence, no Doppler effect is considered.
\end{assumption}
For the typical \gls{6g} \gls{isac} parameters used in the simulations (see \autoref{tab:parameters}), this yields\footnotemark\ $\Tspacing \approx \Fspacing^{-1} = \SI{16.67}{\micro\second}$ and $\T=(\numP+\numD)\Tspacing \approx \SI{0.6}{\milli\second}$, which is sufficiently short to justify the stationarity assumption over a single localization update.
\footnotetext{Neglecting the cyclic prefix (of length $\numCP$), from $\Tspacing = \frac{\numF + \numCP}{\numF \Fspacing} \approx \Fspacing^{-1}$ for $\numF \gg \numCP$.}

\subsection{Channel model and observations}\label{sec:SystemModelChannel}
The channel model is specified under the following assumptions:
\begin{assumption}\label{ass:los}
    Only \gls{los} propagation is considered.
\end{assumption}
\begin{assumption}\label{ass:synchro}
    Time synchronization is already achieved at each \gls{rx} node, and local oscillators are matched (i.e., no carrier frequency offset) between the \gls{ue} and all \gls{rx} nodes.
\end{assumption}

The channel matrix is defined as $\channelMat(\UEpos) \in \C^{\numRX \times \numF}$.
Following the aforementioned assumptions, the channel coefficient at the $\RXindex$-th \gls{rx} node and on the $\Findex$-th subcarrier can be written as (see, for instance, \cite{sakhnini_near-field_2022})
\begin{align}
    \channelMat(\UEpos)[\RXindex,\Findex] & = \channel[\RXindex] e^{-\jc 2 \pi (\carrierF + \Findex \Fspacing) 
    \norm{\UEpos - \RXpos{\RXindex}}/c} \\
    & = \underbrace{\channel[n] e^{-\jc \carrierWn \norm{\UEpos - \RXpos{\RXindex}}}}_{\triangleq \channelcoeff[n]} 
    \underbrace{e^{-\jc \carrierWn  \norm{\UEpos - \RXpos{\RXindex}} \Findex \frac{\Fspacing}{\carrierF}}}_{\triangleq \stMat(\UEpos)[\RXindex,\Findex]}, \label{eq:channel_model}
\end{align}
where $\channel \in \C^{\numRX \times 1}$ contains the random channel coefficient at all nodes, accounting for propagation losses and random phases at each \gls{rx}; $\channelcoeff \in \C^{\numRX \times 1}$ additionally includes the propagation phases associated with the reference frequency; and $\stMat(\UEpos) \in \C^{\numRX \times \numF}$ denotes the steering matrix (i.e., the part of the model that carries the localization information).
\begin{remark}
We consider a distributed \gls{srx} operating \textbf{without phase synchronization}, such that each node $\RXindex$ exhibits an independent random phase, which is absorbed into $\channel[\RXindex]$. 
Although phase synchronization of distributed arrays has been studied in the literature \cite{nanzer_distributed_2021,rashid_frequency_2023-1}, the opportunistic \gls{srx} considered here is assumed to lack the required infrastructure, rendering phase calibration impractical or prohibitively costly due to the associated iterative procedures.
\end{remark}

\begin{remark}
This model assumes that $\channel$ and therefore $\channelcoeff$ are constants across the time and frequency domains, which requires $\T < \T_{\mathrm{c}}$ and $\BW < \BW_{\mathrm{c}}$, where $\T_{\mathrm{c}}$ and $\BW_{\mathrm{c}}$ are the coherence time and bandwidth.
These conditions are directly satisfied by Assumptions \ref{ass:stillness} and~\ref{ass:los}.
\end{remark}

At the \gls{srx}, the observations associated with pilot and data components are respectively denoted as $\pilotObs \in \C^{\numRX \times \numF \times \numP}$ and $\dataObs \in \C^{\numRX \times \numF \times \numD}$ and are expressed as
\begin{align}
    \pilotObs [\RXindex, \Findex, \Pindex ] &= \channelMat(\UEpos)[\RXindex,\Findex] \pilotMat[\Findex,\Pindex] + \pilotAWGN[\RXindex, \Findex, \Pindex], \label{eq:pilotObs}\\
    \dataObs[\RXindex, \Findex, \Dindex] &= \channelMat(\UEpos)[\RXindex,\Findex] \dataMat[\Findex,\Dindex] + \dataAWGN[\RXindex, \Findex, \Dindex], \label{eq:dataObs}
\end{align}
where $\pilotAWGN \in \C^{\numRX \times \numF \times \numP}$ and $\dataAWGN \in \C^{\numRX \times \numF \times \numD}$ are the \gls{awgn} terms that are independent across all dimensions, with each element following a complex normal distribution $\CN(0,\noiseVar)$.

\section{Marginal Maximum Likelihood Estimator}\label{sec:MMLEstimator}

The objective is to estimate the \gls{poi} (i.e., the \gls{ue} position $\UEpos$), from both the pilot and data observations. 
Since data symbols $\dataMat$ are unknown at the \gls{srx}, they act as \gls{np}, along with channel coefficients $\channelcoeff$.
To extract position information from $\pilotObs$ and $\dataObs$, one must eliminate the dependence on these \glspl{np}.
In this paper, we investigate an \gls{mml} approach, where \glspl{np} uncertainty is eliminated by marginalizing the conditional likelihood over their prior distributions.\\

The \gls{mml} estimation problem is expressed as
\begin{align}
     \UEposEst^{\MMLo} &=  \argmax_{\UEposTest} \E_{\dataMat,\channelcoeff} 
    \left\{ \Likelihood(\pilotObs, \dataObs | \dataMat, \channelcoeff; \UEposTest) \right\} \\
    & = \argmax_{\UEposTest}
     \underbrace{\ExpData{
        \underbrace{\ExpChannel{
    \Likelihood(\pilotObs, \dataObs | \dataMat, \channelcoeff; \UEposTest)}}_{= \Likelihood(\pilotObs, \dataObs | \dataMat; \UEposTest)}
    }}_{= \Likelihood(\pilotObs, \dataObs; \UEposTest)}
    , \label{eq:MMLo_formulation}
\end{align}
where $\Likelihood(\pilotObs, \dataObs | \dataMat, \channelcoeff ; \UEposTest)$ denotes the likelihood or probability of observing $(\pilotObs, \dataObs)$ conditioned on the \textbf{stochastic} \glspl{np} $(\dataMat, \channelcoeff)$ and given the \textit{candidate} \gls{poi} $\UEposTest$.
Equation \eqref{eq:MMLo_formulation} follows from the law of total expectation.
For brevity, we introduce the time-concatenated quantities $\symbolMat \in \C^{\numF \times \numPD}$ and $\Obs \in \C^{\numRX\times \numF \times \numPD }$, where $\numPD \triangleq \numP + \numD$ is the total number of \gls{ofdm} symbols. \\

This section first derives the optimal solution to \eqref{eq:MMLo_formulation} by successively marginalizing over $\channelcoeff$ and $\dataMat$, and shows that it is computationally intractable.
A tractable approximation is then derived by leveraging pilot information to eliminate the channel dependency in the data component.
Finally, an analytical acceleration is developed for \gls{qam} constellations, reducing the computational complexity without any additional approximation.

\subsection{Optimal MML solution}

Marginalizing over $\channelcoeff$ and $\dataMat$ in \eqref{eq:MMLo_formulation} yields the following closed-form expression.
\begin{proposition}\label{prop:MMLo}
    We assume \gls{iid} channel coefficients $\channelcoeff$, each following a circularly symmetric complex Gaussian distribution with variance $\channelVar$, i.e., $\channelcoeff \sim \CN(\boldsymbol{0}, \channelVar \I{\numRX})$, and data symbols drawn independently and uniformly from their respective constellations, i.e., $ \forall \, \Findex,\Dindex \; \dataMat[\Findex, \Dindex] \sim \U_{\constSet_{\constMap(\Findex,\Dindex)}}$.
    Marginalizing out $\channelcoeff$ and $\dataMat$ in \eqref{eq:MMLo_formulation} then yields
    \begin{equation}\label{eq:MMLo_estimator}
        \UEposEst^{\MMLo} = \argmax_{\UEposTest} \sum_{\dataMat \in \constSet_{\constMap}^{\numF \times \numD}} 
        \Likelihood(\Obs | \dataMat; \UEposTest) \Likelihood(\dataMat),
    \end{equation}
    where the conditional likelihood is given by 
    \begin{align}
        & \Likelihood(\Obs | \dataMat; \UEposTest) = C \left( \frac{1}{\noiseVar} \sum_{\Findex,\PDindex} \abs{\symbolMat[\Findex, \PDindex]}^2 + \frac{1}{\channelVar} \right)^{-\numRX} \nonumber \\
        & \quad \times \exp \left( \frac{ \sum_{\RXindex} \abs{
            \frac{1}{\noiseVar} \sum_{\Findex,\PDindex} \Obs^{*}[\RXindex, \Findex,\PDindex] \stMat(\UEposTest)[\RXindex,\Findex]\symbolMat[\Findex,\PDindex]}^2
        }{\frac{1}{\noiseVar} \sum_{\Findex,\PDindex} \abs{\symbolMat[\Findex, \PDindex]}^2 + \frac{1}{\channelVar}} \right), \label{eq:MMLo_channel_integrated}
    \end{align}
   and $C \in \R$ is a constant whose expression is provided in  \hyperref[app:MMLo_proof]{Appendix~\ref*{app:MMLo_proof}}
\end{proposition}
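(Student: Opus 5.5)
The plan is to marginalize in the order suggested by \eqref{eq:MMLo_formulation}: first integrate out $\channelcoeff$ for a fixed symbol matrix $\symbolMat$, then average over the finite set of data realizations.

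\textbf{Conditional likelihood.} Because the noise is white, the conditional likelihood factorizes over the nodes $\RXindex$. From \eqref{eq:channel_model}, \eqref{eq:pilotObs} and \eqref{eq:dataObs} we have $\Obs[\RXindex,\Findex,\PDindex] = \channelcoeff[\RXindex]\, a_{\RXindex,\Findex,\PDindex} + \text{noise}$, where $a_{\RXindex,\Findex,\PDindex} \triangleq \stMat(\UEposTest)[\RXindex,\Findex]\symbolMat[\Findex,\PDindex]$. Hence
\[
\Likelihood(\Obs|\dataMat,\channelcoeff;\UEposTest)=(\pi\noiseVar)^{-\numRX\numF\numPD}\exp\Big(-\tfrac{1}{\noiseVar}\textstyle\sum_{\RXindex,\Findex,\PDindex}\abs{\Obs[\RXindex,\Findex,\PDindex]-\channelcoeff[\RXindex]a_{\RXindex,\Findex,\PDindex}}^2\Big).
\]
Since $\abs{\stMat(\UEposTest)[\RXindex,\Findex]}=1$, the quantity $\sum_{\Findex,\PDindex}\abs{a_{\RXindex,\Findex,\PDindex}}^2=\sum_{\Findex,\PDindex}\abs{\symbolMat[\Findex,\PDindex]}^2 \triangleq E$ does not depend on $\RXindex$ or on $\UEposTest$. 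This is why the determinant factor in \eqref{eq:MMLo_channel_integrated} is a common power $-\numRX$ with no dependence on position.

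\textbf{Integrating out $\channelcoeff$.} For each $\RXindex$, expand the quadratic form in $\channelcoeff[\RXindex]$:
\[
\frac{1}{\noiseVar}\Big(\textstyle\sum\abs{\Obs}^2 - 2\,\mathrm{Re}\{\channelcoeff[\RXindex]^* b_\RXindex\} + \abs{\channelcoeff[\RXindex]}^2 E\Big),
\]
with $b_\RXindex=\sum_{\Findex,\PDindex}\Obs[\RXindex,\Findex,\PDindex]a^*_{\RXindex,\Findex,\PDindex}$. Multiplying by the prior $(\pi\channelVar)^{-1}e^{-\abs{\channelcoeff[\RXindex]}^2/\channelVar}$, the $\channelcoeff[\RXindex]$-dependent part of the exponent becomes $-\alpha\abs{\channelcoeff[\RXindex]}^2+2\,\mathrm{Re}\{\channelcoeff[\RXindex]^* b_\RXindex\}/\noiseVar$ with $\alpha=E/\noiseVar+1/\channelVar$. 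Completing the square and applying the complex Gaussian integral $\int_{\C} e^{-\alpha\abs{z}^2+2\mathrm{Re}\{z^*\beta\}}dz=\frac{\pi}{\alpha}e^{\abs{\beta}^2/\alpha}$ gives, for each node, a factor $\alpha^{-1}\exp(\abs{b_\RXindex/\noiseVar}^2/\alpha)$. Note that $\abs{b_\RXindex}=\abs{\sum\Obs^*a}$, which matches the form in \eqref{eq:MMLo_channel_integrated}. Taking the product over $\RXindex$ yields $\alpha^{-\numRX}\exp(\sum_\RXindex\abs{\cdot}^2/\alpha)$. All remaining factors, namely $(\pi\noiseVar)^{-\numRX\numF\numPD}$, $\channelVarN$, and $\exp(-\sum\abs{\Obs}^2/\noiseVar)$, are collected into $C$. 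These are independent of both $\dataMat$ and $\UEposTest$, since they depend only on the observations.

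\textbf{Expectation over $\dataMat$.} The symbols are independent and uniform, so $\Likelihood(\dataMat)=\prod_{\Findex,\Dindex}1/\#\constSet_{\constMap(\Findex,\Dindex)}$. The expectation is therefore exactly the finite sum in \eqref{eq:MMLo_estimator}; this is the law of total expectation applied to a discrete prior. Two further points complete the derivation. First, the pilot and data blocks combine via the time-concatenation into $\symbolMat$, because they share the same $\channelcoeff$. Second, $C>0$ can be dropped without affecting the argmax.

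\textbf{Main obstacle.} The only delicate step is the completion of the square: keeping the conjugation conventions consistent, and verifying that the quadratic coefficient is independent of $\RXindex$ thanks to the unit-modulus steering entries. Everything else is bookkeeping of constants.
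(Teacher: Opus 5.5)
Your proposal is correct and follows the paper's proof essentially step for step. You factorize over the nodes, expand the quadratic and complete the square in each $\boldsymbol{\gamma}[n]$ (using $\lvert\boldsymbol{A}(\widetilde{\boldsymbol{x}}_{\mathrm{s}})[n,q]\rvert=1$ to get a node-independent coefficient), evaluate the complex Gaussian integral, absorb the observation-only factors into $C$, and write the expectation over the uniform discrete prior on $\boldsymbol{S}_{\mathrm{D}}$ as a finite sum. Your accounting of the constant, in which the prior's $\pi^{-N}$ cancels the $\pi^{N}$ produced by the $N$ Gaussian integrals and leaves $\pi^{-K}$, is actually more accurate than the paper's stated $C$; its $\pi^{K+N-1}$ appears to be a slip, though this has no effect on the argmax.
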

\begin{proof}
    See \hyperref[app:MMLo_proof]{Appendix~\ref*{app:MMLo_proof}}.
\end{proof}

\newpage
\begin{remark}
    \autoref{prop:MMLo} reveals that \underline{spatial coherence} across receivers ($\RXindex$) is lost when marginalizing over the channel coefficients $\channelcoeff$. 
    Indeed, due to the distributed nature of the array, each node $\RXindex$ experiences an independent random complex coefficient $\channel[n]$, so that the received signals are no longer phase-aligned across space. 
    As a result, the contributions of different receivers combine incoherently, which is reflected in \eqref{eq:MMLo_channel_integrated} by the summation over receiver indices outside the squared norm.
    This loss of spatial coherence \textbf{degrades the likelihood function}, as will be discussed in Section~\ref{sec:results_impact_parameters} (see \autoref{fig:AF_Q}).
    In contrast, \underline{frequency} ($\Findex$) and \underline{time} ($\PDindex$) \underline{coherence} are preserved, as the expression remains conditioned on $\dataMat$ and all receivers share the same transmitted symbol sequence.
\end{remark}

Note that the resulting estimator requires knowledge of the noise variance $\noiseVar$, which in practice can be estimated from the \gls{snr} using various existing techniques \cite{pauluzzi_comparison_2000, boumard_novel_2003,huang_investigation_2010}.

\begin{remark}
    Due to the discrete nature of the data symbols and the preservation of time and frequency coherence, the \gls{mml} estimator in \eqref{eq:MMLo_estimator} requires an exhaustive evaluation over all possible symbol combinations $\dataMat \in \constSet_{\constMap}^{\numF \times \numD}$.
This involves $\mathcal{O}(\constSizeMax^{\numF \numD})$ combinations, where $\constSizeMax \triangleq \max_{\Findex,\Dindex} \#\constSet_{\constMap(\Findex,\Dindex)}$ is the largest constellation size.
Even for a modest configuration with $\numD=4$, $\numF=8$, and $16$-\gls{qam}, this amounts to $16^{8\times4} \approx 3.4 \times 10^{38}$ combinations—far beyond any practical computational capability.
This exponential complexity renders the optimal \gls{mml} estimator computationally intractable in practice.
\end{remark}

\subsection{Approximate MML solution}
To enable a tractable solution, we introduce an approximation of the \gls{mml} estimator.
The approximation consists of eliminating the dependency on the channel coefficients $\channelcoeff$ in the data term by substituting a pilot-based estimate constructed for a candidate position $\UEposTest$, denoted $\channelcoeffEstPilots(\UEposTest)$.
The resulting approximate \gls{mml} estimator is given by
\begin{align}
    &\UEposEst^{\MMLa} =  \argmax_{\UEposTest}
      \E_{\dataMat, \channelcoeff}\Big\{ \Likelihood(\pilotObs | \channelcoeff; \UEposTest) \nonumber \\
    & \hspace{3.4cm} \times  \Likelihood(\dataObs | \dataMat; \UEposTest, \channelcoeffEstPilots(\UEposTest)) \Big\} \label{eq:MMLa_firstformulation} \\[0.1em]
    & \textcolor{white}{\UEposEst^{\MMLa}} = \argmax_{\UEposTest}
        \underbrace{\ExpChannel{\Likelihood(\pilotObs | \channelcoeff; \UEposTest)}}_{\text{Pilot term}} \nonumber \\
    & \hspace{2.3cm} \times \underbrace{\ExpData{ \Likelihood(\dataObs | \dataMat; \UEposTest, \channelcoeffEstPilots(\UEposTest))}}_{\text{Data term}},   
    \label{eq:MMLa_formulation}
\end{align}
where \eqref{eq:MMLa_firstformulation} follows from \eqref{eq:MMLo_formulation} by injecting the approximation $\Likelihood(\dataObs | \channelcoeff, \dataMat; \UEposTest) \approx \Likelihood(\dataObs | \dataMat; \UEposTest, \channelcoeffEstPilots(\UEposTest))$.
This approximation eliminates the \gls{np} $\channelcoeff$ dependency through the pilot term, while the $\dataMat$ dependency is handled by marginalization in the data term.
Indeed, the data term now depends on a single \gls{np} term, for which this marginalization can be computed efficiently, as highlighted in Remark~\ref{rem:MMLa_comp}.
The pilot-based channel coefficient estimate is obtained from
\begin{equation}\label{eq:channelcoeffEstPilots_ML}
    \channelcoeffEstPilots(\UEposTest) = \argmax_{\channelcoeffTest \in \C^{\numRX \times 1}} \Likelihood(\pilotObs; \channelcoeffTest, \UEposTest),
\end{equation}
which is separable across receivers $\RXindex$. 
Applying Wirtinger calculus \cite{koor_short_2023} yields
\begingroup\small
\begin{align}
    \channelcoeffEstPilots(\UEposTest)[\RXindex] & = \frac{\sum_{\Findex,\Pindex=0,0}^{\numF-1,\numP-1} \stMat^{*}(\UEposTest)[\RXindex,\Findex] \pilotMat^{*}[\Findex,\Pindex]  \pilotObs[\RXindex, \Findex, \Pindex]}{\sum_{\Findex,\Pindex=0,0}^{\numF-1,\numP-1}\abs{\stMat(\UEposTest)[\RXindex,\Findex] \pilotMat[\Findex,\Pindex]}^2} \\
    & = \frac{1}{\pilotEnergy} \sum_{\Findex=0}^{\numF-1} \stMat^{*}(\UEposTest)[\RXindex,\Findex] \pilotObsEq[\RXindex,\Findex], \label{eq:channelcoeffEstPilots}
\end{align}
\endgroup
where the denominator reduces to the pilot energy $\pilotEnergy \triangleq \frob{\pilotMat}^{2}$ since $\forall \UEposTest \, \abs{\stMat(\UEposTest)[\RXindex,\Findex]}^2 = 1$, and where the ``symbol-equalized'' pilot observations $\pilotObsEq \in \C^{\numRX \times \numF}$ are defined as
\begin{equation}\label{eq:pilotObsEq}
    \pilotObsEq[\RXindex,\Findex] \triangleq \sum_{\Pindex=0}^{\numP-1} \pilotMat^{*}[\Findex,\Pindex]  \pilotObs[\RXindex, \Findex, \Pindex],
\end{equation}
as this quantity will be useful in the following.
For clarity in subsequent expressions, we also define $\channelConstruct(\UEposTest) \in \C^{\numRX \times \numF}$ as the channel estimate constructed for a candidate position $\UEposTest$, combining the pilot-based estimate of the random coefficient 
$\channelcoeff[\RXindex]$ with the steering matrix defined in \eqref{eq:channel_model}:
\begin{equation}
    \channelConstruct(\UEposTest)[\RXindex,\Findex] = \channelcoeffEstPilots(\UEposTest)[\RXindex] \stMat(\UEposTest)[\RXindex,\Findex].
\end{equation}

\begin{proposition}\label{prop:MMLa}
    Assuming the following distributions: $\channelcoeff \sim \CN(\boldsymbol{0}, \channelVar \I{\numRX})$ and $ \forall \Findex,\Dindex \; \dataMat[\Findex, \Dindex] \sim \U_{\constSet_{\constMap(\Findex,\Dindex)}}$, independently, the estimator $\UEposEst^{\MMLa}$ defined in \eqref{eq:MMLa_formulation} admits the closed-form expression \eqref{eq:MMLa_estimator}.
\end{proposition}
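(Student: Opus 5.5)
The closed form follows by evaluating the two factors of \eqref{eq:MMLa_formulation} separately. This split is legitimate because, after the substitution $\channelcoeff \to \channelcoeffEstPilots(\UEposTest)$, the data likelihood no longer depends on $\channelcoeff$. Moreover, $\dataMat$ and $\channelcoeff$ are independent, and $\pilotObs$ and $\dataObs$ have independent noise. The expectation of the product therefore factors into the pilot term times the data term. Throughout we work with log-likelihoods and discard every additive term independent of $\UEposTest$.

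For the \textbf{pilot term}, we repeat the Gaussian marginalization behind \autoref{prop:MMLo}, restricted to the pilot block with $\symbolMat$ replaced by the known $\pilotMat$. For each node $\RXindex$, the integrand in $\channelcoeff[\RXindex]$ is $\exp(-\frac{1}{\noiseVar}\sum_{\Findex,\Pindex}|\pilotObs[\RXindex,\Findex,\Pindex]-\channelcoeff[\RXindex]\stMat(\UEposTest)[\RXindex,\Findex]\pilotMat[\Findex,\Pindex]|^2-|\channelcoeff[\RXindex]|^2/\channelVar)$. Completing the square and using $|\stMat|=1$ gives a precision of $\pilotEnergy/\noiseVar+1/\channelVar$, which does not depend on $\UEposTest$. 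Using \eqref{eq:pilotObsEq}, the pilot term becomes, up to a constant,
\[
\exp\Big(\frac{\sum_{\RXindex}\big|\frac{1}{\noiseVar}\sum_{\Findex}\stMat^{*}(\UEposTest)[\RXindex,\Findex]\pilotObsEq[\RXindex,\Findex]\big|^2}{\pilotEnergy/\noiseVar+1/\channelVar}\Big).
\]
By \eqref{eq:channelcoeffEstPilots}, this is proportional to $\exp\big(\kappa\,\norm{\channelcoeffEstPilots(\UEposTest)}^2\big)$ with $\kappa=\pilotEnergy^2/(\noiseVar(\pilotEnergy+\noiseVar/\channelVar))$.

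For the \textbf{data term}, conditioned on $\dataMat$ and given $\channelConstruct(\UEposTest)$, the entries of $\dataObs$ are independent complex Gaussians. Because $\dataMat$ has independent entries with uniform priors, the expectation factorizes over $(\Findex,\Dindex)$:
\[
\prod_{\Findex,\Dindex}\frac{1}{\#\constSet_{\constMap(\Findex,\Dindex)}}\sum_{\constSymbol\in\constSet_{\constMap(\Findex,\Dindex)}}\exp\Big(-\frac{1}{\noiseVar}\sum_{\RXindex}\big|\dataObs[\RXindex,\Findex,\Dindex]-\channelConstruct(\UEposTest)[\RXindex,\Findex]\constSymbol\big|^2\Big).
\]
Expanding the square isolates $\sum_{\RXindex}|\dataObs|^2$, which is constant and can be dropped. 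What remains is $2\Re\{\constSymbol^{*}\sum_{\RXindex}\channelConstruct^{*}\dataObs\}-|\constSymbol|^2\sum_{\RXindex}|\channelConstruct|^2$, where $\sum_{\RXindex}|\channelConstruct[\RXindex,\Findex]|^2=\norm{\channelcoeffEstPilots(\UEposTest)}^2$ for every $\Findex$. This product of per-resource-element constellation sums is the tractable structure referred to in Remark~\ref{rem:MMLa_comp}. Its cost is linear in $\numF\numD$ rather than exponential.

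Combining the two terms, taking the logarithm and dropping constants gives \eqref{eq:MMLa_estimator}. That expression should be a pilot contribution $\kappa\norm{\channelcoeffEstPilots(\UEposTest)}^2$ plus $\sum_{\Findex,\Dindex}\log\sum_{\constSymbol}\exp(\cdot)$. The main obstacle is bookkeeping: the pilot-term constant must be recognized as $\UEposTest$-independent, which relies on $|\stMat|=1$ and on $\pilotEnergy$ not depending on position. The pilot quadratic form must then be rewritten via \eqref{eq:channelcoeffEstPilots}, so that both terms are expressed in the shared quantity $\channelcoeffEstPilots(\UEposTest)$ and the matched-filter outputs $\sum_{\RXindex}\channelConstruct^{*}\dataObs$. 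The target form \eqref{eq:MMLa_estimator} is not shown here, so the exact normalization of the final display may differ.
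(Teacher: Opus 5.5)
Your proof is correct and follows the same route as the paper. The pilot term comes from specializing the Gaussian marginalization of Proposition~\ref{prop:MMLo} to the pilot block, where $\abs{\stMat(\UEposTest)[\RXindex,\Findex]}=1$ makes the precision $\frac{\pilotEnergy}{\noiseVar}+\frac{1}{\channelVar}$ position-independent. The data term comes from factorizing the uniform-prior expectation over $(\Findex,\Dindex)$, expanding the Gaussian exponent, taking the logarithm and dropping constants.

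There is also no normalization discrepancy. Your $\kappa\norm{\channelcoeffEstPilots(\UEposTest)}^2$ equals exactly the paper's first term $\left(\frac{\pilotEnergy}{\noiseVar}+\frac{1}{\channelVar}\right)^{-1}\sum_{\RXindex}\abs{\frac{1}{\noiseVar}\sum_{\Findex}\pilotObsEq^{*}[\RXindex,\Findex]\stMat(\UEposTest)[\RXindex,\Findex]}^2$. Your data exponent also coincides with that of \eqref{eq:MMLa_estimator}, since $\Re\{\constSymbol^{*}\channelConstruct^{*}(\UEposTest)[\RXindex,\Findex]\dataObs[\RXindex,\Findex,\Dindex]\}=\Re\{\constSymbol\,\dataObs^{*}[\RXindex,\Findex,\Dindex]\channelConstruct(\UEposTest)[\RXindex,\Findex]\}$.
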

\begin{proof}
    See \hyperref[app:MMLa_proof]{Appendix~\ref*{app:MMLa_proof}}.
\end{proof}

\begin{figure*}[t]
    \begingroup \small
    \begin{align}
        \UEposEst^{\MMLa} = \argmax_{\UEposTest} & \left( \frac{\pilotEnergy}{\noiseVar} + \frac{1}{\channelVar} \right)^{-1} 
        \sum_{\RXindex=0}^{\numRX-1} \abs{\frac{1}{\noiseVar} \sum_{\Findex=0}^{\numF-1} \pilotObsEq^{*}[\RXindex,\Findex] \stMat(\UEposTest)[\RXindex,\Findex]}^2 \nonumber \\
        + & \sum_{\Findex=0}^{\numF-1}  \sum_{\Dindex=0}^{\numD-1} \log \left( \sum_{\constSymbol \in \constSet_{\constMap(\Findex,\Dindex)}} 
        \exp \left( \frac{2}{\noiseVar} \Re \left\{ \constSymbol\sum_{\RXindex=0}^{\numRX-1} \dataObs^{*}[\RXindex,\Findex,\Dindex] \channelConstruct(\UEposTest)[\RXindex,\Findex]  \right\} - \frac{1}{\noiseVar} \abs{\constSymbol}^2 \sum_{\RXindex=0}^{\numRX-1} \abs{\channelConstruct(\UEposTest)[\RXindex,\Findex] }^2 \right)
        \right)
        \label{eq:MMLa_estimator}
    \end{align}
    \endgroup
\end{figure*}

\begin{remark}\label{rem:MMLa}
    Similarly to $\UEposTest^{\MMLo}$, the contributions of different receivers combine incoherently in the pilot term,
    while in the data term they combine pseudo-coherently. 
    Indeed, the phase correction provided by $\channelcoeffEstPilots$ realigns the observations across the spatial dimension.
    However, since the data symbols differ across subcarriers and time instants, coherence in both frequency and time is lost in the second term.
    Intuitively, the second term of \eqref{eq:MMLa_estimator} sums over all possible transmitted symbols on each subcarrier $\Findex$ and time instant $\Dindex$. Ideally, the correct symbol contributes the most when $\UEposTest = \UEpos$, guiding the estimator toward the ground-truth value.
\end{remark}
\begin{remark}\label{rem:MMLa_comp}
  Finally, unlike \eqref{eq:MMLo_estimator}, which requires exhaustively enumerating all possible combinations $\dataMat \in \constSet_{\constMap}^{\numF \times \numD}$, the data term in \eqref{eq:MMLa_estimator} only requires evaluating all possible constellation points for each $(\Findex,\Dindex)$ pair \textbf{individually} ($s \in \constSet_{\constMap(\Findex,\Dindex)}$), resulting in a tractable computational complexity.  
\end{remark}

\subsection{Acceleration for Quadrature Amplitude Modulation}
The estimator $\UEposEst^{\MMLa}$ \eqref{eq:MMLa_estimator} incurs a computational complexity that is linear in $\constSizeMax$, denoted by $\Compl(\constSizeMax)$, as one must iterate over all constellation points for each time-frequency sample.
In \autoref{prop:MMLfast}, we show that when \gls{qam} constellations are used, the complexity can be reduced to $\Compl(\sqrt{\constSizeMax})$ while yielding the exact same solution, by analytically reformulating \eqref{eq:MMLa_estimator}.
For an $\constSize$-\gls{qam} constellation, the symbols are defined as \cite{3gpp_5g_2025}
\begin{equation}\label{eq:QAM_symbols}
    \constSymbol_{r,i} = \frac{1}{\sqrt{\constEnergy}} \left( \left(2r - \sqrt{\constSize} + 1\right) + \left(2i - \sqrt{\constSize} + 1\right) \jc \right),
\end{equation}
for $0 \leq r,i \in \mathds{Z} \leq \sqrt{\constSize} - 1$, and with $\constEnergy \triangleq 2(\constSize-1)/3$.
\begin{figure*}[t]
    \begingroup \small
    \begin{align}
       & \UEposEst^{\MMLfast} = \argmax_{\UEposTest}  \left( \frac{\pilotEnergy}{\noiseVar} + \frac{1}{\channelVar} \right)^{-1} 
        \sum_{\RXindex=0}^{\numRX-1} \abs{\frac{1}{\noiseVar} \sum_{\Findex=0}^{\numF-1} \pilotObsEq^{*}[\RXindex,\Findex]  \stMat(\UEposTest)[\RXindex,\Findex]}^2 \nonumber \\
        + & \sum_{\Findex=0}^{\numF-1} \sum_{\Dindex=0}^{\numD-1}\sum_{\mathfrak{F} \in \{ \Re, \Im \}} \log \left( \sum_{\substack{k=1 \\ k \text{ odd}}}^{\sqrt{\constSizeqd}-1}
         2 \cosh \left( \frac{2k}{\noiseVar \sqrt{\constEnergyqd}} \mathfrak{F} \left\{ \sum_{\RXindex=0}^{\numRX-1} \dataObs^{*}[\RXindex,\Findex,\Dindex] \channelConstruct(\UEposTest)[\RXindex,\Findex] \right\}  \right)
        \exp \left( - \frac{k^2}{\noiseVar\constEnergyqd}  \sum_{\RXindex=0}^{\numRX-1} \abs{\channelConstruct(\UEposTest)[\RXindex,\Findex]}^2 
        \right) \right)
        \label{eq:MMLfast_estimator}
    \end{align}
    \endgroup
\end{figure*}

\begin{proposition}\label{prop:MMLfast}
    By substituting the explicit expression of the \gls{qam} symbols into \eqref{eq:MMLa_estimator} and grouping terms of similar amplitude along both the real and imaginary axes, the estimator reduces to Equation \eqref{eq:MMLfast_estimator}, where $\constSizeqd = \#\constSet_{\constMap(\Findex,\Dindex)}$.
\end{proposition}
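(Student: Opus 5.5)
We start from the data term of \eqref{eq:MMLa_estimator}. The pilot term is left untouched. For a fixed pair $(\Findex,\Dindex)$ we introduce the shorthands
\begin{equation*}
z \triangleq \sum_{\RXindex=0}^{\numRX-1} \dataObs^{*}[\RXindex,\Findex,\Dindex]\,\channelConstruct(\UEposTest)[\RXindex,\Findex] \in \C, \qquad
g \triangleq \sum_{\RXindex=0}^{\numRX-1} \abs{\channelConstruct(\UEposTest)[\RXindex,\Findex]}^2 \geq 0 .
\end{equation*}
With these, the summand inside the logarithm reads $\exp\left(\frac{2}{\noiseVar}\Re\{s z\} - \frac{g}{\noiseVar}\abs{s}^2\right)$.

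Next we substitute the QAM parametrization \eqref{eq:QAM_symbols}, writing $s_{r,i} = (a_r + \jc\, a_i)/\sqrt{\constEnergyqd}$ with $a_r = 2r-\sqrt{\constSizeqd}+1$. The index set $\{a_r\}$ is the symmetric set of odd integers $\{\pm1,\pm3,\dots,\pm(\sqrt{\constSizeqd}-1)\}$, since $\sqrt{\constSizeqd}$ is even for square QAM. Two identities do the work:
\begin{equation*}
\Re\{s z\} = \frac{a_r \Re\{z\} - a_i \Im\{z\}}{\sqrt{\constEnergyqd}}, \qquad \abs{s}^2 = \frac{a_r^2 + a_i^2}{\constEnergyqd}.
\end{equation*}
With them, each exponential factorizes into a term depending only on $a_r$ and a term depending only on $a_i$. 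The double sum over $(r,i)$ is therefore a product of two single sums:
\begin{equation*}
\Bigg(\sum_{a_r} e^{\frac{2 a_r}{\noiseVar\sqrt{\constEnergyqd}}\Re\{z\} - \frac{a_r^2 g}{\noiseVar\constEnergyqd}}\Bigg)\Bigg(\sum_{a_i} e^{-\frac{2 a_i}{\noiseVar\sqrt{\constEnergyqd}}\Im\{z\} - \frac{a_i^2 g}{\noiseVar\constEnergyqd}}\Bigg).
\end{equation*}
The logarithm of this product is the sum of two logarithms, which produces the sum over $\mathfrak{F}\in\{\Re,\Im\}$.

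Within each single sum we pair $a=k$ with $a=-k$ for odd $k\in\{1,3,\dots,\sqrt{\constSizeqd}-1\}$. The quadratic factor $e^{-k^2 g/(\noiseVar\constEnergyqd)}$ is common to both members of a pair. The linear parts combine as $e^{x}+e^{-x} = 2\cosh(x)$. Because $\cosh$ is even, the minus sign in front of the imaginary part disappears, so the real and imaginary branches take the identical form appearing in \eqref{eq:MMLfast_estimator}.

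Summing over $(\Findex,\Dindex)$ and re-adding the unchanged pilot term then gives \eqref{eq:MMLfast_estimator} exactly. Only one step needs care: the sign and conjugation convention in $\Re\{sz\}$, which fixes where the imaginary part enters and is what the evenness of $\cosh$ makes irrelevant. The complexity claim follows immediately, since each $(\Findex,\Dindex)$ now costs $2\cdot\sqrt{\constSizeqd}/2 = \sqrt{\constSizeqd}$ terms instead of $\constSizeqd$.
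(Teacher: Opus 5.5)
Your proof is correct and follows the paper's argument essentially step for step. You substitute the QAM parametrization, split $\Re\{sz\}$ and $\abs{s}^2$ into real- and imaginary-axis contributions so that the double sum factors into a product of two single sums over odd integers, pair $\pm k$ into $2\cosh$, and take the logarithm to obtain the sum over $\mathfrak{F}\in\{\Re,\Im\}$. The paper's intermediate expression leaves two details implicit that you make explicit: the minus sign on the imaginary branch, which disappears by the evenness of $\cosh$ (equivalently, the symmetry $k\to -k$ of the index set), and the requirement that $\sqrt{\constSizeqd}$ be even. Neither changes the route.
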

\begin{proof}
    See \hyperref[app:MMLfast_proof]{Appendix~\ref*{app:MMLfast_proof}}.
\end{proof}

\begin{remark}
    Rather than iterating over all constellation points, $\UEposEst^{\MMLfast}$ \eqref{eq:MMLfast_estimator} only requires two passes over $\sqrt{\constSizeqd}/2$ values, separately for the real and imaginary parts.
    This reduction is achieved by grouping constellation points of equal amplitude, as illustrated in \autoref{fig:qam}, which highlights that a \gls{qam} constellation can be decomposed into a set of \gls{bpsk}-like constellations at increasing amplitude levels along each axis.
    This structure is also exploited analytically in the proof provided in \hyperref[app:MMLfast_proof]{Appendix~\ref*{app:MMLfast_proof}}.
\end{remark}

\begin{figure}[ht]
    \centering
    \resizebox{0.65\linewidth}{!}{%
        \begin{tikzpicture}[scale=0.8]

    \draw[->, thick, myBlue] (-4,0) -- (4,0) node[right] {\Large $\Re$};
    \draw[->, thick, myBlue2] (0,-4) -- (0,4) node[above] {\Large $\Im$};

    \draw[dashed, gray!40] (-3,-3) grid (3,3);

    \foreach \x in {-3,-1,1,3}
        \foreach \y in {-3,-1,1,3}
            \filldraw[black!60] (\x,\y) circle (3pt);

    \draw[<->, ultra thick, dashed, myBlue]
        (-1,1) to[out=60, in=120] (1,1);

    \draw[<->, ultra thick, dashed, myBlue]
        (-3,1) to[out=60, in=120] (3,1);

    \draw[<->, ultra thick, dashed, myBlue2]
        (1,-1) to[out=30, in=-30] (1,1);

    \draw[<->, ultra thick, dashed, myBlue2]
        (1,-3) to[out=30, in=-30] (1,3);

\end{tikzpicture}
    }
    \caption{$16$-\gls{qam} constellation decomposed into \gls{bpsk}-like amplitude levels along the real and imaginary axes (indicated by dashed arrows).}
    \label{fig:qam}
\end{figure}

Finally, it is worth emphasizing that this reformulation is purely analytical:
$\MMLfast$ \eqref{eq:MMLfast_estimator} is guaranteed to yield \textbf{identical position estimates} to $\MMLa$ \eqref{eq:MMLa_estimator}, i.e., $\UEposEst^{\MMLfast} = \UEposEst^{\MMLa}$, while achieving significant computational savings, especially for large constellations.
Note that a detailed complexity analysis of both variants is presented in Section~\ref{sec:Complexity}.
\section{Numerical Results}\label{sec:NumericalResults}

This section presents \gls{mc} simulation results to assess the localization accuracy of the proposed estimator $\UEposEst^{\MMLfast}$ and compare it against existing methods from the literature. 

\subsection{Baselines}

For all methods, localization is performed in two steps: a \textbf{coarse estimate} obtained by searching over a discrete grid of $\Ngrid$ points defined over the scene, followed by a \textbf{refinement} using an optimization algorithm (here, the Nelder-Mead method \cite{nelder_simplex_1965}).
The following baselines are considered for comparison.

\subsubsection{\texorpdfstring{\textnormal{\Pmethod}}{Pilot}}
The simplest estimator relies solely on the pilot component, discarding the data part entirely.
This is the \textbf{traditional approach} employed in current systems, and is obtained by retaining only the first term of \eqref{eq:MMLfast_estimator} and discarding constant coefficients:
    \begin{align}
        \UEposEst^{\Pmethod} = \argmax_{\UEposTest} & 
        \sum_{\RXindex=0}^{\numRX-1} \Bigg|\underbrace{\sum_{\Findex=0}^{\numF-1} \pilotObsEq^{*}[\RXindex,\Findex] \stMat(\UEposTest)[\RXindex,\Findex]}_{\triangleq \corrP_{n}(\UEposTest)}\Bigg|^2,
        \label{eq:P_estimator}
    \end{align}
where $\corrP_{n}(\UEposTest)$ is introduced here for convenience, as it will be reused in the complexity analysis of Section~\ref{sec:Complexity}.

\subsubsection{\texorpdfstring{\textnormal{\PDmethod}}{Genie}}
This estimator represents the \textbf{best theoretically achievable performance}, obtained by assuming that the data symbols are perfectly known (or perfectly demodulated) at the \gls{srx}.
In this case, all $\numP+\numD$ symbols are treated as pilots by replacing the pilot-equalized observations $\pilotObsEq$ with the full-frame equivalent $\ObsEq \in \C^{\numRX \times \numF}$ in \eqref{eq:P_estimator}, defined as
\begin{equation}
    \ObsEq[\RXindex,\Findex] = \pilotObsEq[\RXindex,\Findex] + \sum_{\Dindex=0}^{\numD-1} \dataMat^{*}[\Findex,\Dindex] \dataObs[\RXindex,\Findex,\Dindex].
\end{equation}

\subsubsection{\texorpdfstring{\textnormal{\textsc{Decision-Directed}}}{Decision-Directed}}
In the \gls{dd} philosophy, the \gls{ue} position is estimated by using the \textbf{data symbol estimates as additional pilots}.
This method follows the expression of $\PDmethod$, replacing the known $\dataMat$ with its estimate, and thus suffers from potential demodulation errors.
Since the \gls{srx} is distributed, demodulation can be performed in either a \textit{distributed} or \textit{centralized} manner.
In the former, each of the $\numRX$ nodes independently estimates its $\numF$ channel coefficients from the pilot observations and performs data demodulation, which yields $\numRX$ potentially different sequences collected in $\dataTensEst \in \C^{\numRX \times \numF \times \numD}$.
In the latter, each node forwards its data observations to the \gls{cpu}, which estimates the channel coefficients and performs centralized data estimation to produce a single sequence $\dataMatEst \in \C^{\numF \times \numD}$.
Note that in each approach, both the channel and data estimations are based on the \gls{lmmse} criterion\footnote{A \gls{zf}-based estimation was also considered but provides results very similar to, though slightly worse than, the \gls{lmmse}-based approach and is therefore not shown here for graph readability.}.
The \textit{distributed} \gls{dd} estimator takes the form
\begingroup\small
\begin{align}
    & \UEposEst^{\mathrm{DD}} = \argmax_{\UEposTest} \sum_{\RXindex=0}^{\numRX-1} \Bigg| \underbrace{\sum_{\Findex=0}^{\numF-1} \ObsEqDD^{*}[\RXindex,\Findex] \stMat(\UEposTest)[\RXindex,\Findex]}_{\triangleq \corrDD_{\RXindex}(\UEposTest)}\Bigg|^2 , \, \text{with} \label{eq:DD_estimator} \\
    & \ObsEqDD[\RXindex,\Findex] \triangleq \pilotObsEq[\RXindex,\Findex] + \sum_{\Dindex=0}^{\numD-1} \dataTensEst^{*}[\RXindex,\Findex,\Dindex]  \dataObs[\RXindex,\Findex,\Dindex], \label{eq:DD_estimator_ObsEq}
\end{align}
\endgroup
while the \textit{centralized} variant is obtained by replacing $\dataTensEst[\RXindex,\Findex,\Dindex]$ with $\dataMatEst[\Findex,\Dindex]$ in \eqref{eq:DD_estimator_ObsEq}.
As before, $\corrDD_{\RXindex}(\UEposTest)$ is defined here for reference in Section~\ref{sec:Complexity}.
The data estimates can either be mapped to the closest constellation point, referred to as \textit{hard} decisions (\gls{hdd}), or left in the complex plane without mapping, referred to as \textit{soft} decisions (\gls{sdd}).
In the following, observations referring to \gls{dd} without further specification apply to both \gls{hdd} and \gls{sdd}.
Note that these \gls{dd} methods operate directly on symbols without accounting for error-correction coding, as we assume the passive system either lacks access to coding information or prefers to avoid performing such computationally intensive decoding.

\subsection{Simulation Framework and Metrics}
To evaluate localization performance, we report the \gls{rmse} of the position estimate,
\begin{equation}
\rmse\{\UEposEst\} = \sqrt{\E\{\norm{\UEposEst - \UEpos}^2\}},
\end{equation}
as a function of the per-node average \gls{snr}, defined as
\begin{equation}
    \snr = \frac{\E_{\RXindex, \channelcoeff}\left\{ \abs{\channelcoeff[\RXindex]}^2 \right\} \constVar}{\noiseVar},
\end{equation}
where $\constVar$ is the symbol variance, assumed identical for pilot and data symbols and normalized to $\constVar = 1$.
Under Assumptions \ref{ass:stillness} and \ref{ass:los}, no fading is considered.
Following the physical optics approximation \cite{kishk_high_2011}, the channel coefficients are modeled in simulation as
\begin{equation}
    \channel[\RXindex] = e^{\jc \channelPhase_{\RXindex}} 
    \frac{1}{\norm{\UEpos - \RXpos{\RXindex}}},
\end{equation}
where $\channelPhase_{\RXindex} \sim \U_{[0,2\pi)}$ accounts for the lack of phase synchronization across nodes.

\glsreset{ser}
Additionally, to highlight the interplay between sensing and communication performance, we consider the following communication metrics: the \gls{ser}, based on hard decisions, and the \gls{mae} of the soft data estimates.

While the developed framework is general and applies to any \gls{das} geometry, we adopt a \glsdesc{uca} of radius $\SRXradius$ and aperture $\SRXaperture \in (0, 2\pi]$ in the simulations.
The \gls{ue} is placed uniformly at random on a disk of radius\footnotemark\ $\Sradius < \SRXradius$.
\footnotetext{A small margin is introduced to avoid the singularity of the $\norm{\UEpos - \RXpos{\RXindex}}^{-1}$ path-loss model as $\norm{\UEpos - \RXpos{\RXindex}}\rightarrow0$.}
Each node experiences a different attenuation, while $\noiseVar$ is kept constant across nodes and set from the averaged \gls{snr}.
The averaged \gls{snr} is defined as $\snr = 2 / (\Sradius^2 \noiseVar)$, which follows from the rough approximation \begingroup\small$\E_{\RXindex, \channelcoeff}\{ \abs{\channelcoeff[\RXindex]}^2\} = \E_{\RXindex, \UEpos}\{ \norm{\UEpos - \RXpos{\RXindex}}^{-2} \} \approx 2/\Sradius^2$\endgroup \, in this configuration.

Unless otherwise stated, the simulation parameters are summarized in \autoref{tab:parameters}, where the frequency band used by the \gls{ue} follows future recommendations for \gls{isac} in \gls{6g} \cite{baduge_frequency_2025}.

\begin{table}[t]
    \centering
    \caption{Default Parameters used in Simulation}
    \label{tab:parameters}
        \begin{tabular}{|l|c|}
            \hline
            \multicolumn{2}{|c|}{\textbf{\gls{ue}}} \\ \hline \hline
            Carrier frequency $\carrierF$ & $\SI{7.2}{\giga\hertz}$ \\
            Subcarrier spacing $\Fspacing$ & $\SI{60}{\kilo\hertz}$ \\ 
            Number of subcarriers $\numF$ & $128$ \\
            Number of pilot \gls{ofdm} symbols $\numP$ & $1$ \\
            Number of data \gls{ofdm} symbols $\numD$ & $35$ \\
            Pilots & \gls{bpsk} \\
            Data constellation $\constSet_{\constMap(\Findex,\Dindex)}$ & $256$-\gls{qam} ($\forall \Findex,\Dindex$)\\ 
            \hline \hline 
            \multicolumn{2}{|c|}{\textbf{\gls{srx}}} \\ \hline \hline 
            Number of nodes $\numRX$ & $5$ \\
            Radius $\SRXradius$ & $\SI{5000}{\carrierWl} = \SI{208.3}{\meter}$ \\
            Aperture $\SRXaperture$ & $2\pi~\si{\radian}$ \\
            \hline \hline 
            \multicolumn{2}{|c|}{\textbf{Simulation}} \\ \hline \hline 
            Scene radius $\Sradius$ & $\SI{4800}{\carrierWl} = \SI{200}{\meter}$ \\
            \makecell[l]{Number of grid points $\Ngrid$ \\ \begingroup\scriptsize(square area of side $2\Sradius$)\endgroup} & $40^2=1600$ \\
            Refinement optimization method & $\texttt{Nelder-Mead}$ \cite{nelder_simplex_1965} \\
            Number of \gls{mc} iterations $\Nmc$ & $3000$ \\
            \hline 
        \end{tabular}
\end{table}

\vspace*{-0.2cm}
\subsection{Localization Performance}
\begin{figure*}[t]
    \includegraphics[width=1.0\textwidth]{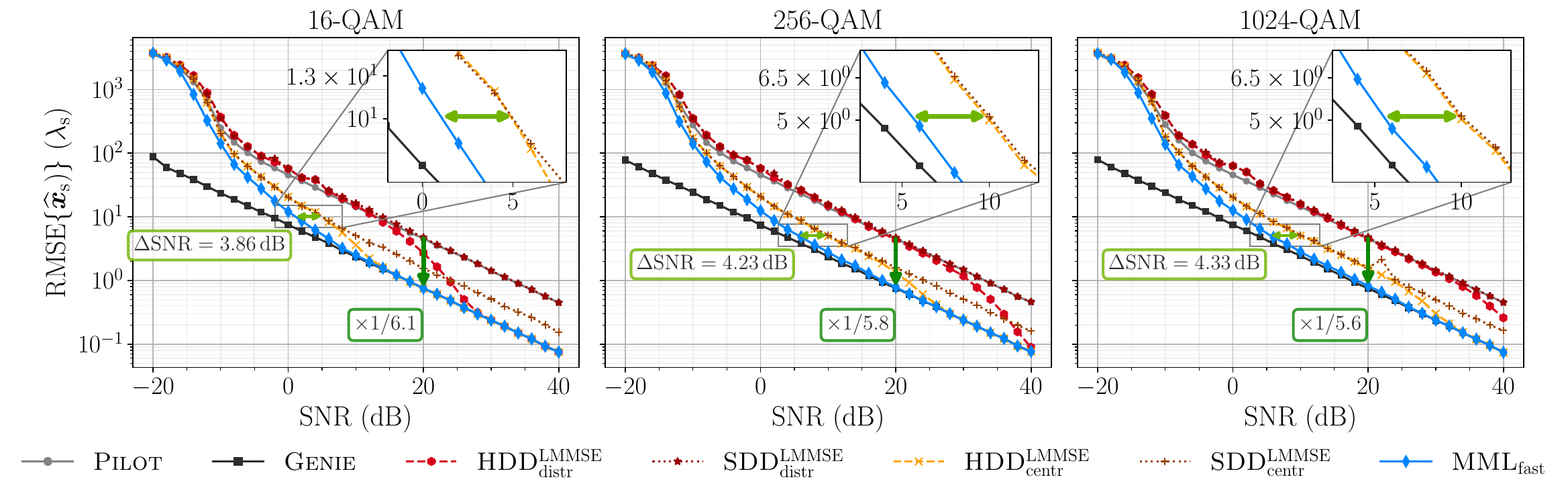}
    \caption{\gls{rmse} as a function of \gls{snr} for different data constellations. \textit{Hard} (resp. \textit{soft}) \gls{dd} baselines are represented by dashed (resp. dotted) lines. The proposed method is shown as a solid blue line with diamond markers, while $\Pmethod$ and $\PDmethod$ are shown as solid lines with circle and square markers, respectively. Note that the \gls{rmse} is normalized with respect to $\carrierWl$ (here $\SI{4.17}{\centi\meter}$).}
    \label{fig:RMSE_constellations}
\end{figure*}
\begin{figure*}[t]
    \centering
    \begin{minipage}{0.5\textwidth}
        \centering
        \includegraphics[width=0.8\linewidth]{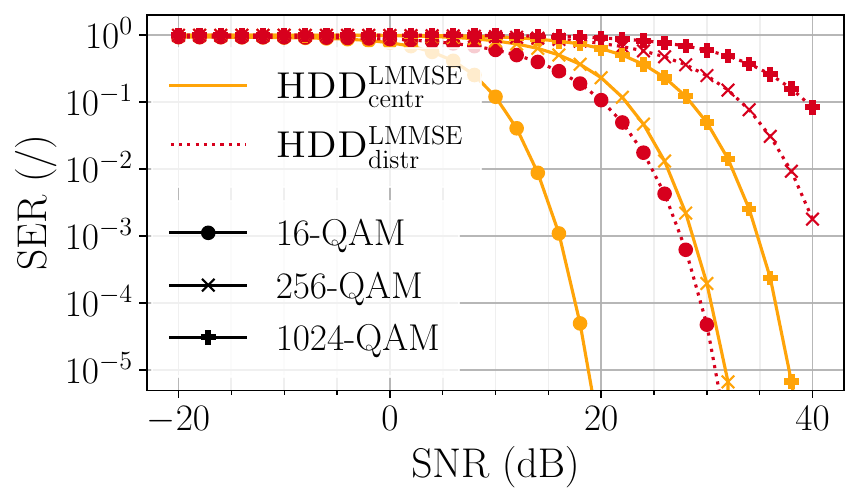}
        \caption{\gls{ser} as a function of the \gls{snr}, for different data constellations.}
        \label{fig:SER_constellations}
    \end{minipage}%
    \begin{minipage}{0.5\textwidth}
        \centering
        \includegraphics[width=0.8\linewidth]{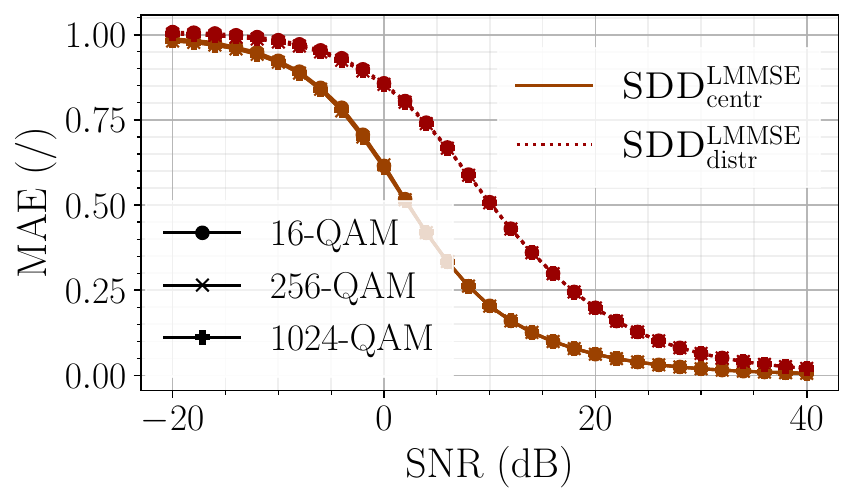}
        \caption{\gls{mae} as a function of the \gls{snr}, for different data constellations.}
        \label{fig:MAE_constellations}
    \end{minipage}
\end{figure*}

Figures \ref{fig:RMSE_constellations} and \ref{fig:SER_constellations} show the localization \gls{rmse} and communication \gls{ser}, respectively, as a function of \gls{snr} for different constellation sizes.
The following observations can be drawn:
\begin{itemize}
    \item As expected, the \textit{distributed} \gls{dd} methods perform worse than the \textit{centralized} ones, as collaborative demodulation improves the quality of the data symbol estimates (\autoref{fig:SER_constellations}) and thus the localization performance (\autoref{fig:RMSE_constellations}), while the potentially different sequences $\dataTensEst$ across nodes in the distributed case degrade the coherence being created in \eqref{eq:DD_estimator_ObsEq}. 

    \item Once all data symbols are correctly demodulated, the \gls{hdd} methods converge to the $\PDmethod$ performance. 
    In practice, this convergence is observed when $\ser \lesssim 10^{-2}$.
    As expected, this occurs at higher \gls{snr} for larger constellations, due to increased demodulation difficulty. 
     
    \item The proposed $\MMLfast$ approach achieves \textbf{significant localization improvement over the traditional pilot-only} method, with a reduction in \gls{rmse} by a factor of up to 6 across all constellations and under the considered parameter settings (highlighted by \textcolor{myGreenArrow2}{$\bm{\downarrow}$}).\\
    
    \item The proposed $\MMLfast$ approach \textbf{consistently converges to the $\PDmethod$ bound at lower \glspl{snr} than any \gls{dd} method}, regardless of the constellation. 
    Additionally, while \gls{hdd} methods suffer from degraded localization performance as the constellation size grows, the proposed approach remains \textbf{robust to constellation size}: the increased symbol uncertainty does not result in a noticeable increase in localization error, which is particularly beneficial for high data-rate communications.
    For the considered set of parameters, \gls{snr} improvements of up to \SI{4.3}{\decibel} over the best \gls{dd} baseline are observed (highlighted by \textcolor{myGreenArrow1}{$\bm{\leftrightarrow}$}). Note that this robustness comes at the computational cost of iterating over a larger number of constellation points since $\constSizeqd$ increases in \eqref{eq:MMLfast_estimator}, as will be discussed in Section~\ref{sec:Complexity}.
\end{itemize}

\vspace*{-0.2cm}
\subsection{Impact of Errors on DD Localization}\label{sec:impact_data_errors}

It is also instructive to analyze the effect of symbol errors on the behavior of \gls{dd} baselines.
Comparing the \gls{rmse}  against the \gls{ser}, we observe that for the \textit{centralized} \gls{hdd}, \textbf{demodulation errors are less detrimental to localization performance than one might expect}\footnotemark.
For instance, for $1024$-\gls{qam} and $\snr \in [0, 20]$~\si{\decibel}, exploiting the full frame with data decisions significantly improves localization compared to using pilots only (\autoref{fig:RMSE_constellations}), even when the \gls{ser} remains high (\autoref{fig:SER_constellations}), confirming that positioning information is present even when symbols are not correctly demodulated. 
This can be attributed to the discrete nature of the \gls{ser} metric: a soft estimate mapped to an incorrect constellation point is counted as a full symbol error, regardless of how close it lies to the true symbol. 
This is further corroborated by \autoref{fig:MAE_constellations}, which shows the \gls{mae} of the soft data symbol estimates for the same constellation sizes. 
The \gls{mae} exhibits a significant decrease over $\snr \in [0, 20]$~\si{\decibel} and is, as expected, independent of the constellation size.
Consequently, the \textbf{\gls{mae}} (soft estimation) \textbf{governs the first \gls{rmse} drop} of \textnormal{$\HDDcentrmethod$} over \textnormal{$\Pmethod$}—consistent across all constellation sizes—while the \textbf{\gls{ser}} (constellation mapping) \textbf{governs the second drop} toward \textnormal{$\PDmethod$} performance, which occurs at higher \gls{snr} for larger constellations.
\footnotetext{A similar observation was reported by Henninger~et~al.\ \cite{henninger_hybrid_2026} in a different scenario, for a single-antenna receiver producing a single demodulated sequence—analogous to the \textit{centralized} case considered here—further supporting the finding that sensing from data symbols does not require correct decoding of the entire frame.}
This is further confirmed by the \gls{rmse} of the \textnormal{$\SDDcentrmethod$} method (\autoref{fig:RMSE_constellations}), which benefits only from the first gain\footnotemark.
This effect is especially pronounced for large constellations (i.e., a larger \gls{snr} gap between the two drops), where a \textbf{misclassified symbol} may still lie geometrically \textbf{close to the correct one} in the complex plane, and thus \textbf{carry substantial positioning information} that contributes beneficially to localization.

Finally, \gls{hdd} methods only close their performance gap with \textnormal{$\PDmethod$} when data estimation becomes nearly perfect, which occurs at increasingly high \gls{snr} for larger constellations.
\textbf{In contrast, the proposed \gls{mml} approach closes this gap at significantly lower \gls{snr}, consistently across all constellation sizes.}

\footnotetext{As observed in \autoref{fig:RMSE_constellations}, the \textit{distributed} \gls{hdd} benefits only from the second gain, while the \textit{distributed} \gls{sdd} yields no improvement over $\Pmethod$ across the considered \gls{snr} range.
This behavior is due to the coherence degradation induced by the potentially differing sequences, as previously discussed.}
\subsection{Impact of System Parameters}\label{sec:results_impact_parameters}

This section analyzes the impact of system parameters on the localization performance. 
A summary of the results is provided in \autoref{tab:performance_analysis}.
\begin{table}[ht]
    \centering
    \caption{Impact of system parameters on localization \gls{rmse}.}
    \label{tab:performance_analysis}
        \begin{tabular}{|l|c|c|c|c|c|}
            \hline
             & $\Pmethod$ & $\PDmethod$ & $\mathrm{HDD}$ & $\MMLfast$ & Figure \\
            \hline \hline 
            $\uparrow \numP$ & $\downarrow$ &  $\downarrow$ &  $\downarrow$ &  $\downarrow$ & -  \\ \hline
            $\uparrow \numD$ & $-$ &  $\downarrow$ &  $\downarrow$ &  $\downarrow$ & \ref{fig:Sweep_D} \\ \hline 
            $\uparrow \constSizeqd$ & $-$ &  $-$ &  $\uparrow$ &  $-$ & \ref{fig:RMSE_constellations} \\ \hline
            $\uparrow \numRX$ & $\downarrow \downarrow$ &  $\downarrow \downarrow$ &  $\downarrow \downarrow$ &  $\downarrow \downarrow$ & \ref{fig:Sweep_N} \\\hline 
            $\uparrow \numF$ & $\downarrow \downarrow$ &  $\downarrow \downarrow$ &  $\downarrow \downarrow$ &  $\downarrow \downarrow$  & \ref{fig:Sweep_Q}\\ 
            \hline 
        \end{tabular}
\end{table}

\subsubsection{Number of data symbols}
\autoref{fig:Sweep_D} shows the \gls{rmse} as a function of $\numD$ for different \gls{snr} values. 
Increasing $\numD$ naturally improves the performance of the $\PDmethod$ bound and benefits \gls{hdd} methods, although these gains become significant only at \gls{snr} values where data estimation is nearly perfect. 
Similarly, $\MMLfast$ benefits from additional data observations only when the \gls{snr} is sufficiently high for the data symbols to meaningfully guide the estimator toward the true solution (see Remark~\ref{rem:MMLa}). 
At lower \gls{snr}, increasing $\numD$ yields only limited performance gains, as reflected by the saturation behavior observed in the figure.
Unlike pilot symbols, which directly reduce the \gls{rmse} by increasing the useful signal energy—so that increasing $\numP$ consistently improves estimation (as observed for $\PDmethod$ here)—increasing $\numD$ enhances the performance of $\MMLfast$ and \gls{hdd} methods only up to a certain point, beyond which a plateau is reached due to imperfect data estimation.
This indicates that the available positioning information becomes limited by the data uncertainty, such that further increases in $\numD$ provide diminishing returns.
Finally, it is worth noting that this regime corresponds to very large $\numD$, for which \autoref{ass:stillness} may no longer hold.
\begin{figure}[ht]
    \centering
    \includegraphics[width=1.0\columnwidth]{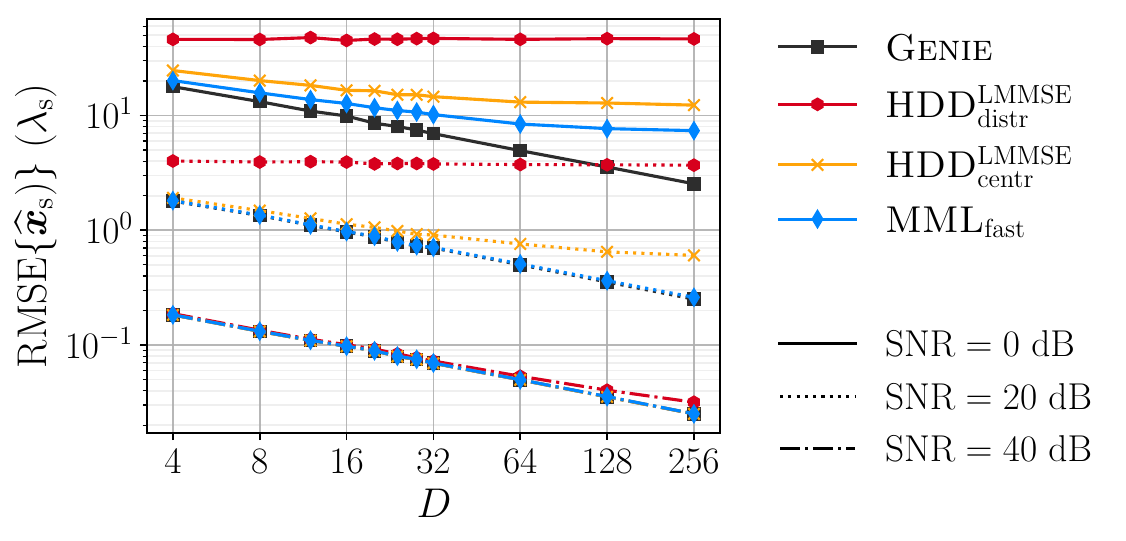}
    \caption{\gls{rmse} as a function of $\numD$ for different \gls{snr} values. Parameters: $\numRX=8$, $\numF=64$, $\Fspacing=\SI{120}{\kilo\hertz}$ (yielding the same bandwidth as in \autoref{tab:parameters}), $\Nmc=2000$; all other parameters are left at their default values.}
    \label{fig:Sweep_D}
\end{figure}

\subsubsection{Number of RX nodes}
\autoref{fig:Sweep_N} depicts the \gls{rmse} with respect to $\numRX$ for multiple \gls{snr} values.
Increasing $\numRX$ reduces the \gls{rmse} for all methods. 
For $\MMLfast$, this improves both the array (i.e., non-coherent) gain ($\sum_{\RXindex}$ outside the $\abs{\cdot}^2$) of the pilot term and the \textit{pseudo}-coherent combining of the data observations in \eqref{eq:MMLfast_estimator}.
For centralized \gls{hdd} approaches, additional nodes lead to better estimation of the data sequence—whose impact on localization has been discussed in Section~\ref{sec:impact_data_errors}—while both centralized and distributed \gls{hdd} methods benefit from the increased array gain over the combined pilot symbols (i.e., pilots and estimated data symbols treated jointly as $\numP + \numD$ pilots, as seen in \eqref{eq:DD_estimator_ObsEq}).

\begin{figure}[ht]
    \centering
    \includegraphics[width=1.0\columnwidth]{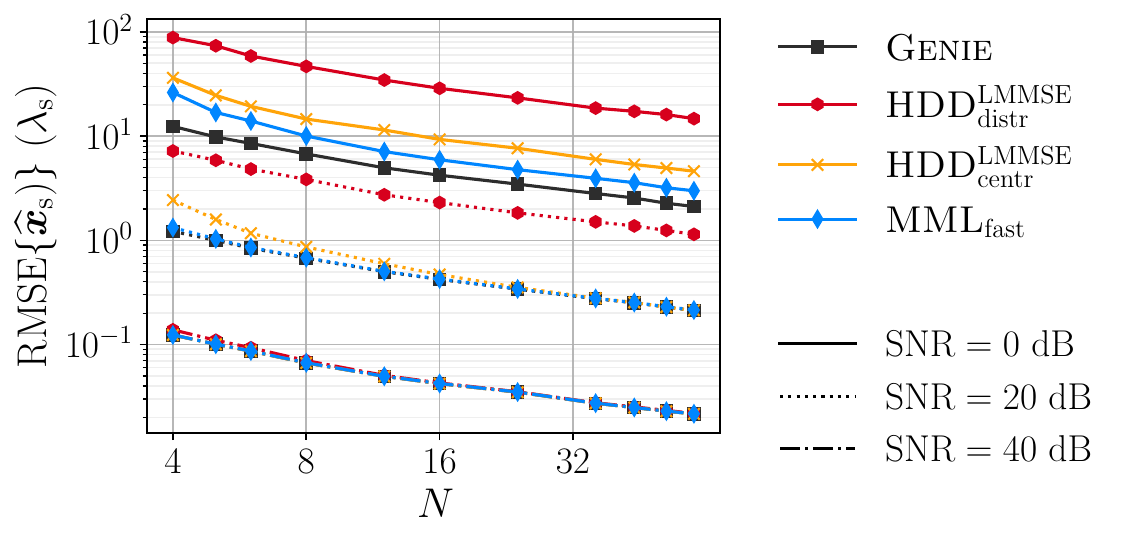}
    \caption{\gls{rmse} as a function of $\numRX$ for different \gls{snr} values. Parameters: $\numF=64$, $\Fspacing=\SI{120}{\kilo\hertz}$ (yielding the same bandwidth as in \autoref{tab:parameters}), $\Nmc=2000$; all other parameters are left at their default values.}
    \label{fig:Sweep_N}
\end{figure}

\subsubsection{Number of subcarriers}

\autoref{fig:Sweep_Q} represents the \gls{rmse} as a function of $\numF$ for different \glspl{snr}. 
Increasing $\numF$ improves localization for all methods, though the performance gap between methods again depends on the operating \gls{snr}, i.e., whether the data observations provide near-perfect positioning information or not. 
This improvement benefits all methods through the bandwidth coherent gain ($\sum_{\Findex}$ inside the $\abs{\cdot}^2$) in the pilot term—or combined pilot term for \gls{hdd} methods—while $\MMLfast$ additionally benefits from the non-coherent bandwidth gain in the data term of \eqref{eq:MMLfast_estimator}.
Hence, it is observed that at low \gls{snr}, increasing $\numF$ benefits the proposed approach more strongly than the \gls{hdd} baselines.
\begin{figure}[ht]
    \centering
    \includegraphics[width=1.0\columnwidth]{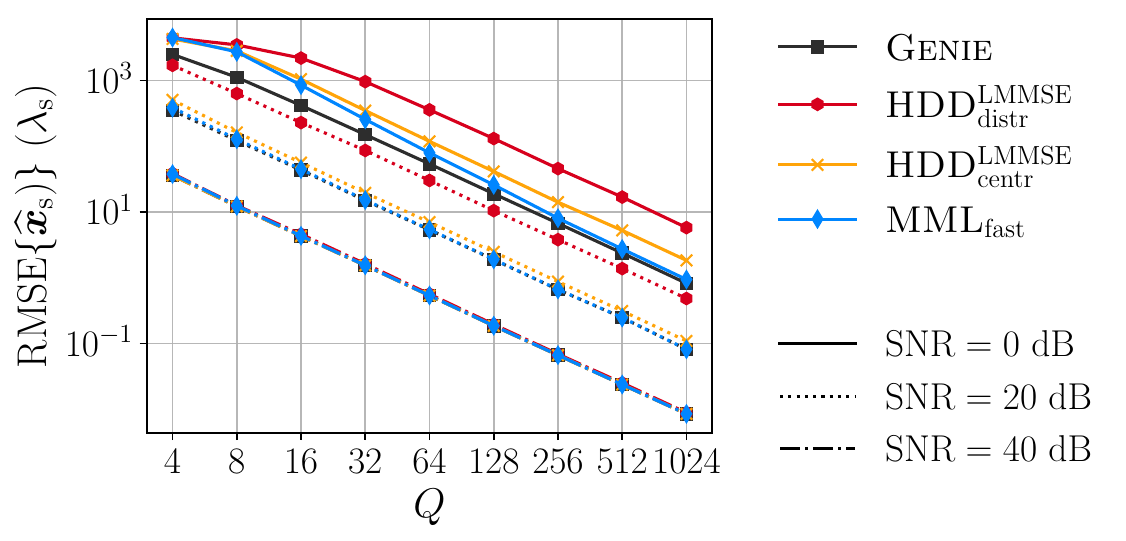}
    \caption{\gls{rmse} as a function of $\numF$ for different \gls{snr} values. Parameters: $\numRX=8$, $\Fspacing=\SI{120}{\kilo\hertz}$, $\Nmc=2000$; all other parameters are left at their default values.}
    \label{fig:Sweep_Q}
\end{figure}

The effect of the bandwidth coherent gain is illustrated in \autoref{fig:AF_Q}, which shows the \gls{af} for different values of $\numF$ for both the space-coherent and non-space-coherent cases, and highlights the intrinsic range resolution $\rangeRes = \frac{c}{2\BW} = \frac{\carrierF \carrierWl}{\numF \Fspacing}$ (indicated by the double-headed arrows in the figure).
These \gls{af} functions are defined as
\begingroup \small
\begin{align}
    &\AF_{\mathrm{coh}}(\UEpos, \UEposTest) = \sum_{\RXindex=0}^{\numRX-1} \sum_{\Findex=0}^{\numF-1} e^{-\jc \carrierWn \left( \norm{\UEpos - \RXpos{\RXindex}} - \norm{\UEposTest - \RXpos{\RXindex}} \right) \Findex \frac{\Fspacing}{\carrierF}}, \\
    & \AF_{\mathrm{non-coh}}(\UEpos, \UEposTest) = \sum_{\RXindex=0}^{\numRX-1} \abs{\sum_{\Findex=0}^{\numF-1} e^{-\jc \carrierWn \left( \norm{\UEpos - \RXpos{\RXindex}} - \norm{\UEposTest - \RXpos{\RXindex}} \right) \Findex \frac{\Fspacing}{\carrierF}}}^2,
\end{align}
\endgroup
where the attenuation term is omitted since the localization procedure does not exploit this information.
These functions correspond to the noise-free output of a filter matched to the candidate position $\UEposTest$ for a source located at $\UEpos$ \cite{monnoyer_chirp-based_2025-1}, following a \gls{ml} approach in the presence of \gls{awgn}.
In particular, the non-coherent \gls{af} directly corresponds to the noise-free log-likelihood of the pilot-only estimator \eqref{eq:P_estimator} when $\numP=1$, and therefore serves as a useful tool to assess the achievable estimation performance given the structural dependence of the observations on the \gls{poi} \cite{monnoyer_chirp-based_2025-1}, here $\UEpos$. 
When $\numF=1$, localization is infeasible as no coherence dimension remains.
Increasing $\numF$ sharpens the main lobe and thus improves range resolution (i.e., reduces $\rangeRes$), thereby enhancing localization performance\footnotemark.
\newcounter{sharedfn}
\setcounter{sharedfn}{\value{footnote}}.

The figure also illustrates the information loss incurred by non-coherent processing, which arises from the lack of phase synchronization across the distributed array nodes.
Furthermore, note that increasing the subcarrier spacing $\Fspacing$ can further improve this resolution\footnotemark[\value{sharedfn}] (i.e., reduce $\rangeRes$) without requiring additional observations along $\Findex$. 
\begin{figure}[ht]
    \centering
    \includegraphics[width=0.94\columnwidth]{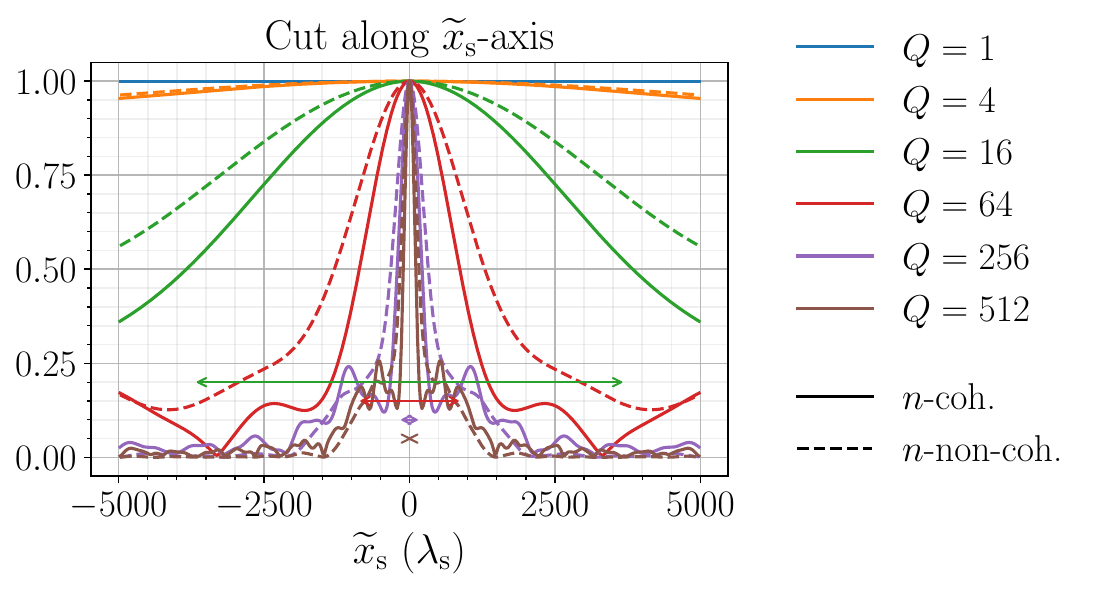}
    \caption{Normalized coherent and non-coherent \gls{af}, $\abs{\AF_{\mathrm{coh}}(\boldsymbol{0}, \UEposTest)}$ and $\abs{\AF_{\mathrm{non-coh}}(\boldsymbol{0}, \UEposTest)}$, for different values of $\numF$. A one-dimensional cut along $\widetilde{x}_{\mathrm{s}} = \UEposTest[0]$ is shown. Parameters: $\Ngrid = 400^2$; all other parameters are left at their default values.}
    \label{fig:AF_Q}
\end{figure}

\footnotetext[\value{sharedfn}]{A sufficiently large $\Ngrid$ was used in the simulations to ensure that the coarse grid captures the main lobe of the objective function. 
This is enforced by setting $\Ngrid = \max(\Ngrid, \alpha \frac{2\Sradius}{\rangeRes})$, where $\alpha \in \Z_{+}$ is a predefined oversampling factor, set to $\alpha=4$ in simulation.}

Finally, increasing $\numRX$ (\autoref{fig:Sweep_N}) or $\numF$ (\autoref{fig:Sweep_Q}) yields stronger localization improvement than increasing $\numD$ (\autoref{fig:Sweep_D}), since additional nodes and subcarriers enrich the steering matrix $\stMat(\UEpos) \in \C^{\numRX \times \numF}$ in both the spatial and frequency domains, whereas increasing $\numD$ (or $\numP$) only provides additional observations without adding new structural information.


\section{Complexity Analysis}\label{sec:Complexity}

This section analyzes the computational complexity of the proposed method against the considered baselines in terms of asymptotic operation counts, illustrating the influence of system parameters on computational requirements.
Note that we also highlight operations that can be performed in parallel (i.e., locally at each node rather than at the \gls{cpu}), and discuss the associated transmission cost, i.e., which quantities need to be forwarded to the \gls{cpu}.

\subsection{Theoretical Analysis}\label{sec:Complexity_th}
\autoref{tab:complexity} reports the asymptotic complexity of the processing steps for each method, and for an $\constSize$-ary constellation, i.e., $\forall \Findex,\Dindex \, \#\constSet_{\constMap(\Findex,\Dindex)} = \constSize$.
The first two columns correspond to the proposed methods, \eqref{eq:MMLa_estimator} and \eqref{eq:MMLfast_estimator}, without and with the acceleration, respectively.
The asymptotic complexity of the localization step is expressed by two terms corresponding to the pilot and data contributions, respectively, each of which must be evaluated across all $\Ngrid$ grid points\footnotemark.
\footnotetext{
After grid evaluation, estimates are refined via an optimization algorithm requiring additional objective function evaluations.
Since the number of such evaluations depends on the solver strategy rather than on the proposed estimators, it is excluded from the theoretical complexity analysis.
In practice, a sufficiently fine search grid yields adequate localization accuracy without refinement.
}
Note that the symbol equalization step is independent of the candidate position and needs to be performed only once prior to the localization step.
The data contribution is in $\Compl(\numRX + \sqrt{\constSize})$ (resp. $\Compl(\numRX + \constSize)$) for $\MMLfast$ (resp. $\MMLa$) rather than $\Compl(\numRX \sqrt{\constSize})$ (resp. $\Compl(\numRX \constSize)$), since the summation over $\RXindex$ can be performed independently of the symbol index $k$ (resp. symbol $s$).
Furthermore, the data term additionally requires the computation of $\channelcoeffEstPilots(\UEposTest)$ \eqref{eq:channelcoeffEstPilots} for all $\UEposTest$, at a cost of $\Compl(\Ngrid\numRX\numF)$, which merges with the pilot term in the asymptotic complexity.
The remaining columns correspond to the baselines introduced above.
The \gls{sdd} approaches share the same complexity as their \gls{hdd} counterparts, without the hard data decision step.
\begin{table*}[ht]
    \centering
    \caption{Asymptotic Computational Complexity per Processing Step}
    \label{tab:complexity}
    \resizebox{\textwidth}{!}{
    \begingroup 
    \setlength{\tabcolsep}{4pt} 
    \renewcommand{\arraystretch}{1.5}
    \begin{tabular}{|l||c|c|c|c|c|c|}
        \hline
        Step & \textcolor{MMLacolor}{$\bm{\MMLa}$} \eqref{eq:MMLa_estimator} & \textcolor{MMLfastcolor}{$\bm{\MMLfast}$} \eqref{eq:MMLfast_estimator} & \textcolor{Pcolor}{\textbf{\Pmethod}} & \textcolor{PDcolor}{\textbf{\PDmethod}} & \textcolor{DDcentrcolor}{$\bm{\HDDcentrmethod}$} & \textcolor{DDdistrcolor}{$\bm{\HDDdistrmethod}$} \\
        \hline 
        \makecell[l]{Channel estimation} & — & — & — & — & $\Compl(\numRX\numF\numP)$ & $\Compl(\numRX\numF\numP)$ \\ 
        \hline
        \makecell[l]{Soft data estimation} & — & — & — & — & $\Compl(\numRX\numF\numD)$ & $\Compl(\numRX\numF\numD)$  \\ 
        \hline 
        \makecell[l]{Hard data decision} & — & — & — & — & $\Compl(\constSize\numF\numD)$ & $\Compl(\numRX\constSize\numF\numD)$ \\
        \hline
        \makecell[l]{Symbol Equalization i.e.,  \\
        construction of $\pilotObsEq$, $\ObsEq$ or $\ObsEqDD$} & $\Compl(\numRX\numF\numP)$ & $\Compl(\numRX\numF\numP)$  & $\Compl(\numRX\numF\numP)$  & $\Compl(\numRX\numF(\numP+\numD))$ & $\Compl(\numRX\numF(\numP+\numD))$  & $\Compl(\numRX\numF(\numP+\numD))$  \\
        \hline
        \makecell[l]{Localization} & \makecell[c]{$\Compl(\Ngrid \numRX\numF)$ \\ $+ \Compl(\Ngrid \numF\numD (\numRX + \constSize))$} & \makecell[c]{$\Compl(\Ngrid \numRX\numF)$ \\ $+ \Compl(\Ngrid \numF\numD (\numRX + \sqrt{\constSize}))$} & 
        $\Compl(\Ngrid \numRX\numF)$ & $\Compl(\Ngrid \numRX\numF)$  & $\Compl(\Ngrid \numRX\numF)$  & $\Compl(\Ngrid \numRX\numF)$   \\
        \hline
    \end{tabular}
    \endgroup
    }
\end{table*}

The following subsections respectively examine acceleration gains, comparison with baselines, and transmission overhead.

\subsubsection{Gain of the Acceleration}
As already mentioned, the analytical acceleration $\MMLfast$ \eqref{eq:MMLfast_estimator} reduces the complexity of the data term from $\Compl(\constSize)$ in $\MMLa$ \eqref{eq:MMLa_estimator} to $\Compl(\sqrt{\constSize})$, while yielding the exact same solution, i.e., $\UEposEst^{\MMLa} = \UEposEst^{\MMLfast}$.
Therefore, $\MMLfast$ should always be preferred over $\MMLa$.
Note that this reformulation is therefore particularly beneficial for high-data-rate communications using large constellations.

\subsubsection{\texorpdfstring{$\MMLfast$}{MMLfast} against baselines}\label{sec:Complexity_th_MMLfast_vs_baselines}
The cheapest method is the pilot-only approach, as exploiting data payloads for positioning incurs an unavoidable computational cost.
Regarding the localization step, the proposed $\MMLfast$ incurs an additional cost of $\Compl(\Ngrid \numF\numD (\numRX +  \sqrt{\constSize}))$ over $\Pmethod$,
while \gls{hdd} methods share the same localization cost as $\Pmethod$ but incur computational overhead from the symbol equalization step, at a cost of $\Compl(\numRX\numF\numD)$.
Additionally, the \gls{dd} methods require a one-time data demodulation overhead: $\Compl(\numRX\numF\numD)$ for soft data estimation, and $\Compl(\constSize\numF\numD)$ and $\Compl(\numRX\constSize\numF\numD)$ for centralized and distributed hard decisions, respectively.
Note that data demodulation is performed only once in \gls{hdd} methods, whereas our approach iterates over the constellation for every position candidate.
Consequently, $\MMLfast$ is cheaper than $\HDDcentrmethod$ only when $\Ngrid (\numRX + \sqrt{\constSize}) \ll \numRX + \constSize$, and cheaper than $\HDDdistrmethod$ only when $\Ngrid (\numRX + \sqrt{\constSize}) \ll \numRX \constSize$.
The former condition is seldom met in practice, while the latter may be satisfied for small grid sizes, large constellations, and a sufficient number of nodes.
Hence, the localization performance gains of the proposed approach over \gls{dd} baselines are generally achieved at the expense of an increased computational cost.

\subsubsection{Transmission Overhead}
In the pilot-only approach \eqref{eq:P_estimator}, pilot symbols $\pilotMat$ are known at each node, allowing node $\RXindex$ to locally compute $\CorrP_{\RXindex} \triangleq \left\{ \corrP_{n}(\UEposTest)\right\} \in \R^{\Ngrid}$, i.e., the set of $\corrP_{n}(\UEposTest)$ values evaluated over all candidate grid positions.
Each node then transmits $\CorrP_{\RXindex}$ to the \gls{cpu}, which estimates the \gls{ue} position as $\argmax_{\UEposTest} \sum_{\RXindex} \abs{\corrP_{n}(\UEposTest)}^2$ (see \eqref{eq:P_estimator}).
This results in a transmission of $\Ngrid$ scalars per node.
By contrast, $\MMLfast$ and $\DDcentrmethod$ (similarly for the soft and hard variants) both require each node to additionally transmit all data observations $\dataObs$ to the \gls{cpu}—to evaluate the second term of \eqref{eq:MMLfast_estimator} and to collaboratively estimate $\dataMatEst$, respectively—resulting in a transmission of $\numF\numD$ coefficients per node for the data part.
The distributed \gls{dd} approach reduces this transmission cost: each node $\RXindex$ locally produces its own data estimates $\dataTensEst[\RXindex,\Findex,\Dindex] \, \forall \Findex, \Dindex$, computes $\CorrDD_{\RXindex} \triangleq \left\{\corrDD_{\RXindex}(\UEposTest)\right\} \in \R^{\Ngrid}$ (see \eqref{eq:DD_estimator}), and transmits only $\CorrDD_{\RXindex}$ to the \gls{cpu}, i.e., $\Ngrid$ scalars per node.
For the considered parameter set, $\Ngrid=1600 < \numF\numD=4480$, yielding a moderate but tangible reduction in transmission overhead compared to $\MMLfast$ and $\DDcentrmethod$.
This transmission reduction is the only advantage of $\DDdistrmethod$ in this scenario, as its localization performance is inferior, as discussed above.

\section{Conclusion}\label{sec:Conclusion}

This paper investigates the joint exploitation of pilot and data payloads for source localization where a distributed opportunistic \gls{srx} aims to localize the transmitting \gls{ue}. 
We develop an \gls{mml} framework in which both the random channel coefficients at all nodes and the unknown data symbols are treated as \glspl{np}, eliminated through marginalization. 
The optimal solution is derived and shown to be computationally intractable.
We then introduce a practical approximation by eliminating the channel dependency in the data term, leveraging pilot-based channel estimates.
We further propose a novel closed-form acceleration for \gls{qam} modulations, reducing the computational complexity from the full constellation size in $\Compl(\constSize)$ to its square root $\Compl(\sqrt{\constSize})$. 
Through Monte Carlo simulations, the proposed approach demonstrates superior localization performance against benchmarks from the literature, namely the traditional pilot-only baseline and several \gls{dd} approaches. 
It achieves a significant \gls{rmse} reduction of up to a factor of 6 over the former, and an \gls{snr} gain of up to \SI{4.3}{\decibel} over the best \gls{dd} baseline, for the considered parameter settings.
Two key findings further emerge: unlike \gls{hdd} methods, whose performance degrades with increasing modulation order, the proposed approach remains robust to constellation size and consistently achieves convergence to the $\PDmethod$ bound at lower \glspl{snr}, regardless of the constellation.
These results show that marginalizing over the constellation systematically outperforms data demodulation for localization, regardless of the modulation order.
This performance improvement comes at the cost of an unavoidable increased computational complexity relative to \gls{dd} baselines, as discussed in the complexity analysis.
Additionally, we study the impact of data demodulation errors on \gls{dd}-based sensing by jointly considering localization and communication metrics, showing that even imperfect demodulation can still yield localization gains over pilot-only methods.
Future work may consider multipath propagation, multistatic radar configurations relying on signal echoes, and extensions to multi-user scenarios, as well as imperfect synchronization and experimental validations.

\appendix
\subsection{Proof of \autoref{prop:MMLo}}\label{app:MMLo_proof}

\begin{proof}
    Marginalizing over $\channelcoeff$ yields
    \begin{align}
         \Likelihood(\Obs | \dataMat; \UEposTest) & = \ExpChannel{\Likelihood(\Obs | \dataMat, \channelcoeff; \UEposTest)} \\
        & = \int \Likelihood(\Obs | \dataMat, \channelcoeff; \UEposTest) \Likelihood(\channelcoeff) d\channelcoeff. \label{eq:MMLo_integration_channel}
    \end{align}
    From observation models \eqref{eq:pilotObs} and \eqref{eq:dataObs}, the elements of $\Obs$ are \gls{iid} and follow $\Obs[\RXindex,\Findex,\PDindex] \sim \CN(\channelcoeff[\RXindex] \stMat(\UEpos)[\RXindex,\Findex]\symbolMat[\Findex,\PDindex],\noiseVar)$. 
    Additionally, we assume $\channelcoeff \sim \CN(\boldsymbol{0}, \channelVar \I{\numRX})$.
    Therefore, the integration in \eqref{eq:MMLo_integration_channel} results in
    \begingroup \small
    \begin{align}
        & \Likelihood(\Obs | \dataMat; \UEposTest) = \frac{1}{\pi^{K+\numRX} \noiseVarK \channelVarN}
        \int \prod_{\RXindex} 
        \exp\left( \frac{-1}{\channelVar} \abs{\channelcoeff[n]}^2\right)
        \nonumber \\
        & \quad 
        \exp\left( 
            -\frac{1}{\noiseVar}
            \sum_{\Findex,\PDindex} \abs{\Obs[\RXindex,\Findex,\PDindex] - \channelcoeff[\RXindex] \stMat(\UEposTest)[\RXindex,\Findex]\symbolMat[\Findex,\PDindex]}^2
        \right) d \channelcoeff, 
    \end{align}
    \endgroup
    where $K = \numRX \numF \numPD $ is the total number of observations.
    Since the observations and the channel coefficients are independent across receivers, the integral factorizes as 
    \begin{equation}\label{eq:property_integrate_iid}
        \int f(\boldsymbol{t}) d\boldsymbol{t} = \int \prod_n f(t_n) d\boldsymbol{t} = \prod_n \int f(t_n) dt_n.
    \end{equation} 
    Expanding the squared term and moving factors independent of $\channelcoeff$ outside the integral yields
    \begingroup \small
    \begin{align}
        & \Likelihood(\Obs | \dataMat; \UEposTest) = \frac{1}{\pi^{K+\numRX} \noiseVarK \channelVarN}
        \prod_{\RXindex} \exp \left( \frac{-1}{\noiseVar} \sum_{\Findex,\PDindex} \abs{\Obs[\RXindex, \Findex,\PDindex]}^2 \right) \nonumber \\
        &  
        \int_{\channelcoeffn} \exp\left(2\Re\left\{ U_{\RXindex} \channelcoeffn\right\}  - V \abs{\channelcoeffn}^2 \right) d \channelcoeffn, \label{eq:MMLo_expression_sub}
    \end{align}
    \endgroup
    where
    \begin{align}
        U_{\RXindex} & \triangleq \frac{1}{\noiseVar} \sum_{\Findex, \PDindex} \Obs^{*}[\RXindex, \Findex, \PDindex] \stMat(\UEposTest)[\RXindex, \Findex] \symbolMat[\Findex, \PDindex], \\
        V & \triangleq \frac{1}{ \noiseVar} \sum_{\Findex,\PDindex} \abs{\symbolMat[\Findex,\PDindex]}^2 + \frac{1}{\channelVar},
    \end{align}
    as $\abs{\stMat(\UEposTest)[\RXindex,\Findex]}^2 = 1$.
    Completing the square in the exponent, the remaining integral evaluates to
    \begingroup \small
    \begin{align}
        & \int_{\channelcoeffn} \exp \left(2\Re\left\{ U_{\RXindex} \channelcoeffn\right\}  - V \abs{\channelcoeffn}^2  \right) d \channelcoeffn \nonumber \\
        &  = \int_{\channelcoeffn} \exp \left(-V \abs{\frac{U_{\RXindex}^{*}}{V} - \channelcoeffn}^2 + \frac{\abs{U_{\RXindex}}^2}{V}  \right) d \channelcoeffn \\
        & = \exp \left(\frac{\abs{U_{\RXindex}}^2}{V} \right) \frac{\pi}{V} \label{eq:MMLo_integral_result},
    \end{align}
    \endgroup
    where we used the property $\int_{z} e^{-\alpha \abs{z}^2} dz = \pi / \alpha$.
    Substituting \eqref{eq:MMLo_integral_result} along with the expressions of $U_{\RXindex}$ and $V$ back into \eqref{eq:MMLo_expression_sub} leads to
    \begin{align}
        & \Likelihood(\Obs | \dataMat; \UEposTest) = \textcolor{gray}{\frac{1}{\pi^{K+\numRX} \noiseVarK \channelVarN}} \nonumber\\
        & \quad \prod_{\RXindex} \textcolor{gray}{\exp \left( \frac{-1}{\noiseVar} \sum_{\Findex, \PDindex} \abs{\Obs[\RXindex,\Findex,\PDindex]}^2 \right)}  \frac{\textcolor{gray}{\pi}}{\frac{1}{\noiseVar} \sum_{\Findex,\PDindex} \abs{\symbolMat[\Findex, \PDindex]}^2 + \frac{1}{\channelVar}} 
        \nonumber \\
        & \quad \times \exp \left( \frac{\abs{
            \frac{1}{\noiseVar} \sum_{\Findex,\PDindex} \Obs^{*}[\RXindex, \Findex,\PDindex] \stMat(\UEposTest)[\RXindex,\Findex]\symbolMat[\Findex,\PDindex]}^2
        }{\frac{1}{\noiseVar} \sum_{\Findex,\PDindex} \abs{\symbolMat[\Findex, \PDindex]}^2 + \frac{1}{\channelVar}} \right),
    \end{align}
    where the \textcolor{gray}{gray terms} are independent of $\UEposTest$ and $\symbolMat$ and can thus be absorbed into a multiplicative constant $C$, given by
    \begin{equation}
        C = \frac{\exp \left(-\frob{{\Obs}}^2 / \noiseVar \right) }{\pi^{K+\numRX-1} \noiseVarK \channelVarN}.
    \end{equation}
    It remains to compute $\Likelihood(\pilotObs, \dataObs; \UEposTest) = \ExpData{\Likelihood(\pilotObs, \dataObs | \dataMat; \UEposTest)}$.
    Since the summations over $\Findex$ and $\PDindex$ appear inside the squared norm (i.e., frequency and time coherence are preserved), the sums cannot be factorized.
    Consequently, one must exhaustively enumerate all possible symbol combinations $\dataMat \in \constSet_{\constMap}^{\numF \times \numD}$, yielding
    \begin{equation}
       \Likelihood(\pilotObs, \dataObs; \UEposTest) = \sum_{\dataMat \in \constSet_{\constMap}^{\numF \times \numD}} 
        \Likelihood(\Obs | \dataMat; \UEposTest) \Likelihood(\dataMat),
    \end{equation}
    which concludes the proof.
\end{proof}

\subsection{Proof of \autoref{prop:MMLa}}\label{app:MMLa_proof}

\begin{proof}
The pilot term follows directly from \autoref{prop:MMLo} by restricting to the known pilot sequence $\pilotMat$ rather than the entire frame $\symbolMat$, discarding constant terms, denoting $\pilotEnergy \triangleq \frob{\pilotMat}^{2}$ and substituting \eqref{eq:pilotObsEq}.
Assuming a uniform distribution over the constellations, the data term is computed as
\begin{align}
    & \ExpData{ \Likelihood(\dataObs | \dataMat; \UEposTest, \channelcoeffEstPilots(\UEposTest))} \nonumber \\
    & = 
    \int_{\dataMat} \prod_{\RXindex,\Findex,\Dindex} \Likelihood(\dataObs[\RXindex, \Findex, \Dindex] | \dataMat; \UEposTest, \channelcoeffEstPilots(\UEposTest)) \Likelihood(\dataMat) d \dataMat \label{eq:MMLa_derivation_s1} \\
    & = \prod_{\Findex,\Dindex} \sum_{\constSymbol \in \constSet_{\constMap(\Findex,\Dindex)}} \prod_{\RXindex} \Likelihood(\dataObs[\RXindex, \Findex, \Dindex] | \constSymbol ; \UEposTest, \channelcoeffEstPilots(\UEposTest)) \frac{1}{\# \constSet_{\constMap(\Findex,\Dindex)}},  \label{eq:MMLa_derivation_s2}
\end{align}
where \eqref{eq:MMLa_derivation_s1} leverages the independence of the observations, and \eqref{eq:MMLa_derivation_s2} exploits the property given in \eqref{eq:property_integrate_iid}, accounting for the discrete nature of the symbols. 
The likelihood in \eqref{eq:MMLa_derivation_s2} is given by
\begin{align}
    & \Likelihood(\dataObs[\RXindex, \Findex, \Dindex] | \constSymbol ; \UEposTest, \channelcoeffEstPilots(\UEposTest)) \nonumber \\
    & = \frac{1}{\pi \noiseVar} \exp \left(\frac{-1}{\noiseVar} \abs{\dataObs[\RXindex,\Findex,\Dindex]-\channelcoeffEstPilots(\UEposTest)[\RXindex]\stMat(\UEposTest)[\RXindex,\Findex] \constSymbol}^2 \right). \label{eq:dataObs_likelihood}
\end{align}
Substituting \eqref{eq:dataObs_likelihood} into \eqref{eq:MMLa_derivation_s2}, expanding the squared norm, taking the logarithm of \eqref{eq:MMLa_formulation}, and omitting constant terms yields \eqref{eq:MMLa_estimator}, which completes the proof.
\end{proof}

\subsection{Proof of \autoref{prop:MMLfast}}\label{app:MMLfast_proof}
\begin{proof}
Let us define \begingroup\small$\mathrm{H}_{\Findex}(\UEposTest) \triangleq \sum_{\RXindex=0}^{\numRX-1} |\channelConstruct(\UEposTest)[\RXindex,\Findex]|^2$\endgroup \,
and \begingroup\small$S_{\Findex\Dindex}^{*}(\UEposTest) \triangleq \sum_{\RXindex=0}^{\numRX-1} \dataObs^{*}[\RXindex,\Findex,\Dindex] \channelConstruct(\UEposTest)[\RXindex,\Findex]$\endgroup.
For a $\constSizeqd$-\gls{qam} constellation, the data term likelihood (inside the $\log$) in \eqref{eq:MMLa_estimator} can be rewritten as

\begin{align}
    & \mathsf{S} \triangleq \sum_{\constSymbol \in \constSet_{\constMap(\Findex,\Dindex)}} 
        \exp \left( \frac{2}{\noiseVar}  \Re \left\{S_{\Findex\Dindex}^{*}(\UEposTest) \constSymbol  \right\} - \frac{1}{\noiseVar} \mathrm{H}_{\Findex}(\UEposTest) \abs{\constSymbol}^2 \right) \nonumber \\
    & = \sum_{r,i}
    \exp \left( \frac{1}{\noiseVar} \left( 2 \Re \left\{ S_{\Findex\Dindex}^{*}(\UEposTest) \constSymbol_{r,i} \right\} - \mathrm{H}_{\Findex}(\UEposTest)\abs{\constSymbol_{r,i}}^2 \right) \right), \label{eq:MMLfast_derivation_s1}
\end{align}
where $\constSymbol_{r,i}$ is given by \eqref{eq:QAM_symbols} and $0 \leq r,i \in \mathds{Z} \leq \sqrt{\constSizeqd} - 1$.
The remainder of the proof proceeds as follows. First, \eqref{eq:QAM_symbols} is substituted into \eqref{eq:MMLfast_derivation_s1}, and the real part and squared magnitude of the products involving $\constSymbol_{r,i}$ with $S_{\Findex\Dindex}^{*}(\UEposTest)$ and $\mathrm{H}_{\Findex}(\UEposTest)$ are expanded using the identities $\Re\{z_1 z_2 \} = \Re \{ z_1 \} \Re \{ z_2 \} - \Im \{ z_1 \} \Im \{ z_2 \}$ and $\abs{z_1}^2 =  (\Re \{ z_1 \})^2 +   (\Im \{ z_1 \})^2$, $\forall z_1, z_2 \in \C$. 
The real and imaginary contributions are then separated, yielding a product of a sum over $r$ and a sum over $i$. 
Finally, the summation indices are reparametrized as $r' = 2r - \sqrt{\constSize} + 1$, with $r'$ ranging over odd integers in $[-\sqrt{\constSize}+1, \sqrt{\constSize}-1]$ (and similarly for $i'$).
These steps yield the following intermediate result:
\begin{align}
     \mathsf{S} = \prod_{\mathfrak{F}\in \{\Re, \Im \}}  \Bigg[
         \sum_{\substack{k=-\sqrt{\constSize}+1 \\ k \text{ odd}}}^{\sqrt{\constSize}-1}  & \exp\left(  \frac{1}{\noiseVar} 2 \mathfrak{F} \left\{S_{\Findex\Dindex}^{*}(\UEposTest) \right\} \frac{k}{\sqrt{\constEnergyqd}} \right) \nonumber \\
    &  \exp\left( -\frac{1}{\noiseVar} \mathrm{H}_{\Findex} \frac{k^2}{\constEnergyqd} \right)\Bigg],
\end{align}
where $k$ is the running index replacing $r'$ and $i'$.
The symmetric pairs $(-k, k)$, corresponding to constellation points of equal amplitude (see \autoref{fig:qam}), are then grouped together, enabling the use of the hyperbolic cosine identity.
Taking the logarithm of the resulting expression completes the proof.
\end{proof}

\begingroup
\linespread{0.98}\selectfont  
\bibliographystyle{IEEEtran}
\bibliography{nourl, references}
\endgroup

\end{document}